\title{Risk measures based on weak optimal transport}
\author{Michael Kupper}
\address{Department of Mathematics and Statistics, University of Konstanz}
\email{kupper@uni-konstanz.de}
\author{Max Nendel}
\address{Center for Mathematical Economics, Bielefeld University}
\email{max.nendel@uni-bielefeld.de}
\author{Alessandro Sgarabottolo}
\address{Center for Mathematical Economics, Bielefeld University}
\email{alessandro.sgarabottolo@uni-bielefeld.de}
\thanks{Financial support through the Deutsche Forschungsgemeinschaft (DFG, German Research Foundation) -- SFB 1283/2 2021 -- 317210226 is gratefully acknowledged.}
	\date{\today}
\begin{document}	
	
	\begin{abstract}
		In this paper, we study convex risk measures with weak optimal transport penalties.\ In a first step, we show that these risk measures allow for an explicit representation via a nonlinear transform of the loss function.\ In a second step, we discuss computational aspects related to the nonlinear transform as well as approximations of the risk measures using, for example, neural networks.\ Our setup comprises a variety of examples, such as classical optimal transport penalties, parametric families of models, uncertainty on path spaces, moment constrains, and martingale constraints.\ In a last step, we show how to use the theoretical results for the numerical computation of worst-case losses in an insurance context and no-arbitrage prices of European contingent claims after quoted maturities in a model-free setting. \smallskip
		
		\noindent \emph{Key words:} Risk measure, weak optimal transport, neural network, model uncertainty, martingale optimal transport \smallskip
		
		\noindent \emph{AMS 2020 Subject Classification:}\ Primary 91G70; 91B05; Secondary 68T07; 91G20; 91G60
		
	\end{abstract}

\maketitle

\section{Introduction}
A key component of financial modeling lies in the description of the distribution of underlying risk factors. Since these risk factors typically take values in high-dimensional vector spaces and exhibit complex dependence structures, there is a natural trade-off between simplicity or implementability of a probabilistic model on the one hand and the most accurate description of reality on the other. Since a perfect reflection of reality in a single probabilistic framework is almost never possible, financial modeling is almost always accompanied by model uncertainty.\ A mathematical framework that allows to include model uncertainty in the assessment of financial contracts is given by the theory of risk measures, cf.\ \cite{foellmer2016finance}.\ Under appropriate continuity conditions, every convex risk measure admits a robust representation $\rho(f) = \sup_{\nu}\int f\,\d\nu - \alpha(\nu)$, where the evaluation $\int f\,\d\nu$ of a financial position (or loss function) $f$ is performed among a family of probability measures, which are penalized according to their plausibility using a suitable penalty function $\alpha$.

In the past decade, a strand of literature has formed around risk measures, where the penalty term  $\alpha(\nu)$ is given as a minimal cost of transportation from a given reference measure $\mu$, cf. \cite{bartl2020computational}, or by a suitable function applied to the Wasserstein distance to the reference measure $\mu$, see \cite{nendel2022parametric} and \cite{bartl2021randomwalks,nendel2023wasserstein} for a dynamic setting. Here, the reference measure $\mu$ might be, for example, a model that is particularly attractive due to its computational simplicity, suggested by some expert, or purely data-driven, e.g., an empirical distribution derived from given data points, cf. \cite{obloj2021robust}.\ A similar idea is inherent to the discipline of distributionally robust optimization, where an additional optimization over (possibly rescaled versions of) such risk measures is performed, cf. \cite{gao2023distributionally,esfahani2018data,pflug2007ambiguity}. We also refer to \cite{bartl2021sensitivity,bartl2023sensitivity, bart2023sensitivityadapted} for an analysis of the sensitivity of robust optimization problems with respect to the degree of uncertainty.\ 
Moreover, robust versions of risk measures based on uncertainty sets are investigated in \cite{bernard2022robust, pesenti2020optimizing}.

The aim of this article is a theoretical and numerical study of convex risk measures based on weak optimal transport penalties.\ More precisely, we focus on risk measures, whose penalty functions measure the distance of a probability $\nu$ on a Borel space $Y$ to a reference probability $\mu$ on a Borel space $X$ by means of a weak optimal transport cost of the form
\[
\alpha(\nu) := \inf_{\kappa \in \ker(\mu, \nu)} \int_X C\big(x, \kappa(x)\big)\, \mu(\d x),
\]
where $\ker(\mu, \nu)$ contains all kernels $\kappa
\colon X\times\mathcal{P}(Y)\to [0,1]$ such that $\pi=\mu\otimes \kappa$ is a coupling between $\mu$ and $\nu$. 
The special choice $C\big(x, \kappa(x)\big) := \int_Y c(x, y) \kappa(x, \d y)$ with a cost function $c\colon X\times Y \to [0, \infty]$ results in a standard optimal transport problem as studied, e.g., in \cite{bartl2020computational,nendel2022parametric}.\ The weak optimal transport problem was initially introduced in \cite{MR3706606} and shortly afterwards in \cite{MR4028478}.\ For an overview on basic results of optimal transport theory under weak transportation costs and a collection of applications from different areas, we refer to \cite{MR4337709}. For instance, in stochastic optimization and mathematical finance, weak optimal transport appears in the context of martingale transport problems \cite{beiglboeck2013model}, causal transport problems \cite{MR3738324}, semimartingale transportation problems \cite{MR3127880}, and stability results for pricing and hedging \cite{MR4118990}.\ In our framework, the weak optimal transport cost allows to consider a wide spectrum of examples, which range from classical optimal transport penalizations, over parametric uncertainty and uncertainty on path spaces in the spirit of \cite{foellmer2022optimal} to penalties related to c\`adl\`ag martingales interpolating between the reference measure $\mu$ as initial marginal and a terminal marginal $\nu$, see Section \ref{sec:examples} for more details.

In a first step, we show that the risk measure
\[
\rho(f):=\sup_{\nu\in \cP(Y)} \bigg(\int_Y f(y)\,\nu(\d y)-\inf_{\kappa\in \ker(\mu,\nu)}\int_X C\big(x,\kappa(x)\big)\,\mu(\d x)\bigg)
\]
can be expressed in terms of a nonlinear transform of the loss function $f$ which is independent of the reference measure $\mu$. Theorem \ref{thm.main} shows that
\[
\rho(f) = \int_X f^C(x)\,\overline\mu(\d x),
\]
 where $f^C(x)=\sup_{\vartheta \in \cP(Y)} \big(\int f\, \d \vartheta-C(x,\vartheta)\big)$ is the so-called $C$-transform of $f$ and $\overline\mu$ is the extension of $\mu$ to the universal $\sigma$-algebra on $X$.

In a second step, we exploit the fact that the inner expression in the $C$-transform is measure affine and show that, in special yet relevant cases, the computation of the $C$-transform can be reduced to the extreme points.\ In particular we use a classical representation of the extreme points of sets of probability measures given in terms of generalized moment constraints, cf.\ \cite{weizsaecker1979irp,winkler1988extreme}.
This allows also for a variational formulation of $\rho(f)$ based on an optimization over Borel measurable functions taking values in the set of extreme points, see Proposition \ref{prop:selection}.

In the case of a Wasserstein or martingale Wasserstein penalty, this allows to use a neural network approximation in the spirit of \cite{nendel2022parametric}.\ The idea of using neural network approximations for the computation of risk measures based on classical optimal transport has recently been exploited in \cite{eckstein2021computation,eckstein2020robust}, see also \cite{de2020minmax,henry2019martingale} and the references therein for min-max methods.\ However, the approach contained in these works relies on duality results, which transform the original problem into the dual superhedging problem. In our approach, we use a neural network to model relevant transformations of the reference measure which are able to approximate the worst-case measure.\ Therefore, while the approach of \cite{eckstein2021computation,eckstein2020robust} provides an approximation from above, and gives as a byproduct the optimal superhedging strategy for the robust hedging problem, our approach provides an approximation from below, and gives as a byproduct the $C$-transform of the loss function.\ Note that, having the $C$-transform of the loss function at hand, allows for a fast re-evaluation of the risk measure as the reference model changes at the cost of a Monte-Carlo simulation.\ This makes the approach also appealing for risk management, where the possibility to promptly adjust the evaluation of a risk as the baseline model changes is of great advantage.

An elementary reformulation of the martingale constraint, which is a priori an infinite-dimensional constraint, allows to understand it as a single moment constraint, and allows to consider robust price bounds under no-arbitrage considerations in a model-free setting, see Section \ref{sec:mart.constr} and Section \ref{sec:model.free.pricing}.\ In Section \ref{sec:pricing.after}, we consider European claims with maturities that exceed the last quoted maturity, and model the absence of information after the quoted maturities by penalizing deviations from the last quoted distribution of the underlying assets with a martingale Wasserstein distance.\ Using the neural network approximation scheme, we  compute no-arbitrage bounds for the prices of options written on a possibly large number of underlying assets, see also Section \ref{sec:higherdim}.

The rest of the paper is organized as follows. Section \ref{sec:main} introduces the setup and contains the representation of the risk measure $\rho$ via the $C$-transform of the loss function $f$, see Theorem \ref{thm.main}.\ In Section \ref{sec:constraints}, we introduce a specific class of cost functions that allow for more explicit solutions, when the set of extreme points of the ambiguity set is known, see Theorem \ref{thm.reduction.extr}. In Section \ref{sec:examples}, we present a variety of examples that are covered by our approach and indicate possible applications.\ Finally, in Section \ref{sec.numerics}, we formulate our neural network approximation based on a general approximation result for Wasserstein penalties, see Proposition \ref{prop.dense.approx}.\ This approximation based on neural networks is then applied to several examples from finance and insurance.

\subsection*{Notation.} Given a topological space $\Omega$, we denote by $\cB(\Omega)$ its Borel $\sigma$-algebra, and by $\cP(\Omega)$ the set of all probability measures on $\cB(\Omega)$, endowed with the weak topology.\ Let $\overline \R:=\R\cup\{\pm \infty\}$.\ Then, we denote by ${\rm B}(\Omega)$ the space of all Borel measurable functions $\Omega\to \overline\R$ and by ${\rm B}_{\rm b}(\Omega)$ the space of all bounded Borel measurable functions $\Omega\to \R$.\ For a probability measure $\mu\in \cP(\Omega)$, let $\overline \sigma(\mu):=\cB(\Omega)\vee \mathscr N_\mu$, where $\mathscr N_\mu$ consists of all sets $N\in 2^\Omega$ with $N\subset B$ for some $B\in \cB(\Omega)$ with $\mu(B)=0$. Then, we denote by $\mathcal U(\Omega):=\bigcap_{\mu\in \cP(\Omega)}\overline \sigma(\mu)$ the $\sigma$-algebra of all universally measurable sets, and by $\overline \mu$ the extension of $\mu$ to $\mathcal U(\Omega)$ for all $\mu\in \cP(\Omega)$.
For a universally measurable function $f\colon \Omega\to \overline \R$ and $\mu\in \cP(\Omega)$, we define
\[
\overline \mu f:=\int_\Omega f(x)\, \overline\mu(\d x):= \int_\Omega f^+(x)\, \overline\mu(\d x)-\int_\Omega f^-(x)\, \overline \mu(\d x)\in \overline \R.
\]
Here and throughout, we use the convention $\infty-\infty=-\infty$. If $f\in B(\Omega)$, we simply write $\mu f$ and $\overline \mu f$ instead of $\int_\Omega f(x)\, \mu(\d x)$ and $\int_\Omega f(x)\, \overline \mu(\d x)$, respectively.\
For $\mu\in \cP(\Omega)$, we denote by $\mathscr L^1(\mu)$ the space of all $f\in {\rm B}(\Omega)$ with $\int_\Omega |f(x)|\, \mu(\d x)<\infty$, where we use the convention $|-\infty|=\infty$. In a similar way, we denote by $\mathscr L^1(\overline \mu)$ the space of all universally measurable functions $f\colon \Omega\to \overline \R$ with $\int_\Omega |f(x)|\, \overline \mu(\d x)<\infty$.

Let $(S,\mathscr S)$ be a measurable space.\ We denote the space of all $\cB(\Omega)$-$\mathscr S$-measurable functions $\Omega\to S$ by $\Bm(\Omega,S)$.\ For a probability measure $\mu\in \cP(\Omega)$ and a function $g\in \Bm(\Omega,S)$, we define the push-forward $g(\mu)\colon \mathscr S\to [0,1]$ by
\[
\big(g(\mu)\big)(A):=\mu(g\in A)\quad\text{for all }A\in \mathscr S.
\]

\section{Weak transport penalties} \label{sec:main}

Throughout, let $X,Y$ be two Borel spaces, cf.\ \cite[Definition 7.7]{bertsekas1978stochastic}.\ Since $Y$ is a Borel space, the set $\cP(Y)$ of all Borel probability measures on $Y$ is again a Borel space, cf.\ \cite[Corollary 7.25.1]{bertsekas1978stochastic}. We denote by $\ker(X,Y)$ the set of all Borel measurable stochastic kernels $\kappa$ from $X$ to $Y$, i.e., $\kappa\colon X\times \cB(Y)\to [0,1]$ with
\begin{enumerate}
\item[(i)] $\kappa(x)=\kappa(x,\, \cdot\;)\in \cP(Y)$ for all $x\in X$,
\item[(ii)] the map $X\to [0,1],\; x\mapsto \kappa(x,B)$ is Borel measurable for all $B\in \cB(Y)$.
\end{enumerate}
For $\mu\in \cP(X)$ and $\nu\in \cP(Y)$, we define $$\ker(\mu,\nu):=\big\{\kappa\in \ker(X,Y) \colon \mu\kappa=\nu\big\},$$ where $(\mu\kappa)(B):=\int_X\kappa(x,B)\,\mu(\d x)$ for all $B\in \cB(Y)$.

In the following, let $\mu\in \cP(X)$ be a reference measure and $C\colon X\times \cP(Y)\to [0,\infty]$ be a Borel measurable cost function.\ We assume that there exists some $\kappa_0\in \ker(X,Y)$ with $C\big(x,\kappa_0(x)\big)=0$ for $\mu$-almost all $x\in X$. For $f\in \Bm(Y)$, let
\begin{equation}\label{def:RM}
\rho(f):=\sup_{\nu\in \cP(Y)} \bigg(\int_Y f(y)\,\nu(\d y)-\inf_{\kappa\in \ker(\mu,\nu)}\int_X C\big(x,\kappa(x)\big)\,\mu(\d x)\bigg),
\end{equation}
where we use the convention $\inf\emptyset =\infty$.\ By assumption, $$
\int_Y f(y)\,(\mu\kappa_0)(\d y)\leq \rho(f)\quad\text{for all }f\in \Bm(Y),
$$
i.e., the kernel $\kappa_0$ can be interpreted as a transform of states in $X$ into states in $Y$, which is not penalized, or, in other words, $\rho$ incorporates a risk loading w.r.t.\ the transformed reference measure $\mu\kappa_0$.\ Observe that the functional $\rho\colon \Bm(Y)\to \overline \R$ defines a \textit{monetary risk measure}, i.e.,
\begin{enumerate}
 \item[(i)] $\rho(f)\leq \rho(g)$ for all $f,g\in \Bm(Y)$ with $f\leq g$,\footnote{For $f,g\in \Bm(X)$, we write $f\leq g$ if $f(x)\leq g(x)$ for all $x\in X$.}
 \item[(ii)] $\rho(0)=0$ and $\rho(f+m)=\rho(f)+m$ for all $f\in \Bm(Y)$ and $m\in \R$,
\end{enumerate}
and that the weak optimal transportation cost $\alpha(\nu):=\inf_{\kappa\in \ker(\mu,\nu)}\int_X C\big(x,\kappa(x)\big)\,\mu(\d x)$ for each $\nu\in \cP(Y)$ is a penalty function of the risk measure $\rho$.\ We refer to \cite[Chapter 4]{foellmer2016finance} for a detailed discussion on monetary risk measures.

We define the \textit{$C$-transform} of $f\in \Bm(Y)$ by
\[
f^C(x):=\sup_{\vartheta \in \cP(Y)} \big(\vartheta f-C(x,\vartheta)\big)\in \overline\R\quad\mbox{for all }x\in X.
\]
Since $C\big(x,\kappa_0(x)\big)=0$ for $\mu$-almost all $x\in X$, it follows that $\kappa_0(x)f \leq f^C(x)$ for $\mu$-almost all $x\in X$. Moreover, $f^C$ is upper semianalytic and therefore universally measurable, cf.\ \cite[Proposition 7.29]{bertsekas1978stochastic} and \cite[Proposition 7.47]{bertsekas1978stochastic}.
In the special case, where $C(x, \delta_y) := c(x,y)$ for all $x\in X$ and $y\in Y$ with some Borel measurable function $c\colon X\times Y\to [0,\infty]$, and 
$C(x, \vartheta) := \infty$ for all $x\in X$ if $\vartheta\in \cP(Y)$ is not a Dirac measure, the $C$-transform reduces to the well-known \textit{$c$-transform} $f^c(x)=\sup_{y\in Y} \big(f(y)-c(x,y)\big)$ for all $x\in X$. 

In the following, we restrict the risk measure $\rho$ to the set
$$
\cL:=\bigg\{f\in {\rm B}(Y)\colon -\infty<\int_Y f(y)\,(\mu\kappa_0)(\d y)\text{ and }\int_X f^C(x)\, \overline\mu(\d x)<\infty\bigg\}.
$$
Observe that ${\rm B}_{\rm b}(Y)\subset \cL$. In particular, $\cL$ contains all constant functions.\ The following characterization of
the risk measure \eqref{def:RM} is useful for its numerical computation in Section~\ref{sec.numerics}.  
\begin{theorem}\label{thm.main}
The set $\cL$ is convex and $\rho\colon \cL\to \R$ defines a convex risk measure, i.e.,
$$
\rho\big(\lambda f+(1-\lambda)g\big)\leq \lambda \rho(f)+(1-\lambda)\rho(g) \quad\text{for all }f,g\in \cL\text{ and }\lambda\in [0,1].
$$
Moreover, for all $f\in \cL$,
\begin{equation}\label{eq.main}
 \rho(f)=\sup_{\kappa\in \ker(X,Y)}\int_X \kappa(x) f-C\big(x,\kappa(x)\big)\,\mu(\d x)=\int_X f^C(x)\,\overline\mu(\d x).
\end{equation}
\end{theorem}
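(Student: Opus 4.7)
The plan is to establish the two equalities in \eqref{eq.main}, from which convexity of $\cL$, convexity of $\rho$, and $\rho(\cL)\subset\R$ will follow as consequences of the representation $\rho(f)=\int f^C\,\d\overline\mu$. The first equality is essentially algebraic and comes from rewriting $\rho$ as a double supremum in $(\nu,\kappa)$ and merging the two suprema; the second equality is the substantive one and requires measurable selection.

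For the first equality, I would write $\rho(f)=\sup_{\nu}\sup_{\kappa\in\ker(\mu,\nu)}\bigl[\int_Y f\,\d\nu-\int_X C(x,\kappa(x))\,\mu(\d x)\bigr]$. Every $\kappa\in\ker(X,Y)$ arises from a unique $\nu=\mu\kappa\in\cP(Y)$ and lies in $\ker(\mu,\nu)$; conversely, the inner supremum ranges over such kernels. The two suprema therefore collapse to a single one over $\ker(X,Y)$. Combined with the disintegration identity $\int_Y f\,\d(\mu\kappa)=\int_X\kappa(x)f\,\mu(\d x)$, valid for bounded Borel $f$ and extended to $f\in\cL$ by splitting into positive and negative parts under the convention $\infty-\infty=-\infty$, this yields the kernel representation.

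The second equality splits into two inequalities. The $\le$ direction is pointwise: by the very definition of the $C$-transform, $\kappa(x)f-C(x,\kappa(x))\le f^C(x)$ for all $x\in X$ and all $\kappa\in\ker(X,Y)$, and integration against $\overline\mu$ gives the bound. For the $\ge$ direction, the map $(x,\vartheta)\mapsto\vartheta f-C(x,\vartheta)$ on $X\times\cP(Y)$ is Borel measurable (since $\vartheta\mapsto\vartheta f$ is Borel measurable for Borel $f$, by approximation and monotone convergence on positive and negative parts), so its supremum $f^C$ is upper semianalytic. The Jankov--von Neumann selection theorem (e.g.\ \cite[Prop.\ 7.50]{bertsekas1978stochastic}) then furnishes, for each $\eps>0$, a universally measurable selector $\vartheta_\eps^\ast\colon X\to\cP(Y)$ with $\vartheta_\eps^\ast(x)f-C(x,\vartheta_\eps^\ast(x))\ge f^C(x)-\eps$ on $\{f^C<\infty\}$, a set of full $\overline\mu$-measure since $f\in\cL$ forces $\int f^C\,\d\overline\mu<\infty$. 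Replacing $\vartheta_\eps^\ast$ by a Borel measurable kernel $\kappa_\eps\in\ker(X,Y)$ that agrees with it $\mu$-almost surely preserves the $\eps$-optimality after integration against $\mu$, and letting $\eps\downarrow 0$ delivers the reverse bound.

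Convexity of $\cL$ and of $\rho$ both rest on the pointwise estimate $(\lambda f+(1-\lambda)g)^C\le\lambda f^C+(1-\lambda)g^C$, immediate from writing $C(x,\vartheta)=\lambda C(x,\vartheta)+(1-\lambda)C(x,\vartheta)$ inside the defining supremum and distributing over the two summands. Finiteness $\rho(f)\in\R$ for $f\in\cL$ follows from the upper bound $\rho(f)=\int f^C\,\d\overline\mu<\infty$ together with the lower bound $\rho(f)\ge\int f\,\d(\mu\kappa_0)>-\infty$, obtained by plugging $\kappa=\kappa_0$ into the kernel representation and using $C(x,\kappa_0(x))=0$ $\mu$-a.s. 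The main obstacle I expect is the selection step for the $\ge$ direction of the second equality: producing the universally measurable $\eps$-selector from the upper semianalytic integrand and then replacing it by an honest Borel kernel in $\ker(X,Y)$ without destroying the $\mu$-integrated $\eps$-optimality, which crucially uses that $\overline\mu$ is the universal completion of $\mu$.
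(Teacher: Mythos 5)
Your proposal follows essentially the same route as the paper: merge the two suprema into one over $\ker(X,Y)$, bound $\rho(f)\le\int f^C\,\d\overline\mu$ pointwise, and obtain the reverse inequality via a universally measurable $\eps$-selector (Bertsekas--Shreve, Prop.~7.50) modified $\mu$-a.s.\ into a Borel kernel (Lemma~7.28), with convexity reduced to $(\lambda f+(1-\lambda)g)^C\le\lambda f^C+(1-\lambda)g^C$. The only point where the paper is more careful than your ``immediate'' claim is precisely this last estimate: because of the conventions $\infty-\infty=-\infty$ and $\overline\R$-valued integrands, the paper proves it only on a set of full $\overline\mu$-measure where $f^C$ and $g^C$ are finite and restricts the supremum to $\vartheta$ with $C(x,\vartheta)<\infty$ (and uses that $\mathscr L^1(\mu\kappa_0)$, $\mathscr L^1(\overline\mu)$ are vector spaces to get $\cL$-membership of the combination), but this is a refinement of your argument rather than a different approach.
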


\begin{proof}
 Let $f\in \cL$.\ Since $\inf_{\kappa\in \ker(\mu,\nu)}\int_X C(x,\kappa(x))\,\mu(\d x)=\infty$ whenever $\ker(\mu,\nu)=\emptyset$ and $\mu\kappa=\nu$ for all $\kappa\in \ker(\mu,\nu)$, we find that
 \begin{align*}
 \rho (f)&=\sup_{\nu\in \cP(Y)}\sup_{\kappa\in \ker(\mu,\nu)}\left( \mu \kappa f- \int_X C\big(x,\kappa(x)\big)\,\mu(\d x)\right)\\
 &=\sup_{\kappa\in \ker(X,Y)}\left( \mu \kappa f- \int_X C\big(x,\kappa(x)\big)\,\mu(\d x)\right)\\
 &= \sup_{\kappa\in \ker(X,Y)} \int_X \kappa(x) f-C\big(x,\kappa(x)\big)\,\mu(\d x).
 \end{align*}
 Hence,
 \begin{equation}\label{eq.integrability}
 -\infty<\int_Y f(y)\,(\mu\kappa_0)(\d y)= \mu \kappa_0f \leq \rho(f)\quad \text{and}\quad\rho(f)\leq \int_X f^C(x)\,\overline\mu(\d x)<\infty.
 \end{equation}
 Now, let $\eps>0$ and $\vartheta\colon X\to \cP(Y)$ be a universally measurable $\eps$-selection of $f^C$, i.e.,
 \[
 f^C(x)\leq \vartheta(x)f-C\big(x,\vartheta(x)\big)+\eps\quad\text{for all }x\in X,
 \]
 cf.\ \cite[Proposition 7.50]{bertsekas1978stochastic}. Then, by \cite[Lemma 7.28]{bertsekas1978stochastic}, there exists some $\kappa\in \ker(X,Y)$ with $\kappa(x)=\vartheta(x)$ for $\mu$-almost all $x\in X$. We thus obtain that
 \begin{align*}
 \int_X f^C(x)\, \overline\mu(x)&\leq \int_X \vartheta(x)f-C\big(x,\vartheta(x)\big)\, \overline\mu(x)+\eps\\
 &=\int_X \kappa(x)f-C\big(x,\kappa(x)\big)\, \mu(x)+\eps\leq \rho(f)+\eps.
 \end{align*}
 Now, \eqref{eq.main} follows by passing to the limit $\eps\downarrow 0$.
 
 It remains to show that $\cL$ is convex and that $\rho\colon \cL\to \R$ is convex.\ To that end, let $f,g\in \cL$ and $\lambda\in [0,1]$.\ By \eqref{eq.integrability}, it follows that $f,g\in \mathscr L^1(\mu\kappa_0)$ and $f^C,g^C\in \mathscr L^1(\overline\mu)$.\ Since $\mathscr L^1(\mu\kappa_0)$ and $\mathscr L^1(\overline\mu)$ are vector spaces, $\lambda f+(1-\lambda)g\in \mathscr L^1(\mu\kappa_0)$ and $\lambda f^C+(1-\lambda)g^C\in \mathscr L^1(\overline\mu)$.\ Hence, in view of \eqref{eq.main}, both the convexity of $\cL$ and $\rho\colon \cL\to \R$ follow as soon as we have shown that
 \begin{equation}\label{eq.conv.cconj}
 \big(\lambda f+(1-\lambda)g\big)^C\leq \lambda f^C+(1-\lambda)g^C\quad\overline\mu\text{-almost surely}.
 \end{equation}
 Since $f^C,g^C\in \mathscr L^1(\overline\mu)$, there exists a set $A\in \cU(X)$ with $\overline\mu(A)=1$ as well as
 \[
 f^C(x)<\infty \quad\text{and}\quad g^C(x)<\infty \quad\text{for all }x\in A.
 \]
 Let $x\in A$ and $\vartheta\in \cP(Y)$ with $c(x,\vartheta)<\infty$. Then, $\vartheta f<\infty$ and $\vartheta g<\infty$. Hence,
 $$
    \vartheta \big(\lambda f+(1-\lambda)g\big)-C(x,\vartheta)=  \big(\lambda \vartheta f+(1-\lambda) \vartheta g\big)-C(x,\vartheta)\leq \lambda f^C(x)+(1-\lambda)g^C(x)
 $$
 for all $x\in A$.\ Taking the supremum over all $\vartheta\in \cP(Y)$ with $c(x,\vartheta)<\infty$, it follows that
 \[
  \big(\lambda f+(1-\lambda)g\big)^C=\sup_{\substack{\vartheta\in \cP(Y) \\ C(x,\vartheta)<\infty}}\Big(\vartheta \big(\lambda f+(1-\lambda)g\big)-C(x,\vartheta)\Big)\leq \lambda f^C(x)+(1-\lambda)g^C(x)
 \]
 for all $x\in A$.\ We have therefore shown the validity of \eqref{eq.conv.cconj}, and the claim follows. 
\end{proof}

\section{Optimal transport costs with additional constraints}\label{sec:constraints}
In this section, we consider weak transport costs with constraints given in terms of a convex and Borel measurable set $\cM \subset \cP(Y)$ and a family $(\tau_x)_{x\in X}$ of bijective maps $Y\to Y$.\ We assume that both $(x,y)\mapsto \tau_x(y)$ and $(x,y)\mapsto \tau_x^{-1}(y)$ are Borel measurable maps $X\times Y\to Y$.\ Then, by \cite[Lemma 7.12 and Proposition 7.25]{bertsekas1978stochastic}, the maps $(x,\vartheta)\mapsto \tau_x(\vartheta)$ and $(x,\vartheta)\mapsto \tau_x^{-1}(\vartheta)$ are Borel measurable $X\times \cP(Y)\to \cP(Y)$.\ Moreover, for all $x\in X$, the map $\cP(Y)\to \cP(Y),\, \vartheta\mapsto \tau_x(\vartheta)$ is bijective with inverse  $\cP(Y)\to \cP(Y),\, \vartheta\mapsto \tau_x^{-1}(\vartheta)$.

Throughout this section, we consider a weak optimal transport cost of the form
\begin{equation} \label{eq.transport.cost}
C(x, \vartheta) = \begin{cases}\int_Y c(x,y)\,\vartheta(\d y), & \text{if } \tau_x^{-1}(\vartheta)\in \cM,
\\
\infty, & \text{otherwise,}
\end{cases}
\end{equation}
where $c\colon X \times Y \to [0, \infty]$ is a Borel measurable cost function.\ Since $c$, $\cM$, and the map $X\times \cP(Y)\to \cP(Y),\, (x,\vartheta)\mapsto \tau_x^{-1}(\vartheta)$ are Borel measurable, it follows that the cost function $C\colon X\times \cP(Y)\to [0,\infty]$ is Borel measurable.\ Again, we assume that there exists some $\kappa_0\in \ker(X,Y)$ with $C\big(x,\kappa_0(x)\big)=0$ for $\mu$-almost all $x\in X$, i.e., $\kappa_0(x)\tau_x^{-1}\in\mathcal{M}$ and $\int_Y c(x,y)\,\kappa_0(x,\d y)=0$ for $\mu$-almost all $x\in X$.\ Since the map $\cP(Y)\to \cP(Y),\, \vartheta\mapsto \tau_x(\vartheta)$ is bijective with inverse $\cP(Y)\to \cP(Y),\, \vartheta\mapsto \tau_x^{-1}(\vartheta)$, the $C$-transform of $f\in B(Y)$ is given by
$$
f^C(x)=\sup_{\vartheta\in \mathcal{M}}\Big(\tau_x(\vartheta)f-C\big(x,\tau_x(\vartheta)\big)\Big)=\sup_{\vartheta\in \mathcal{M}}\int_Y \tau_x\big(f(y)-c(x,y)\big)\,\vartheta(\d y),
$$
where, for $g\in \Bm(X\times Y)$, we use the notation
$$
\tau_xg(x,y):=g\big(x,\tau_x(y)\big)\quad\text{for all }x\in X\text{ and }y\in Y.
$$
 The following remark shows how, in the unconstrained case, this setup reduces to a penalty function, which is given in terms of a classical optimal transport problem.\ For additional examples of penalty functions and the related risk measures covered by this setup, we refer to Section \ref{sec:examples} below.

\begin{remark}\label{rem:transport}
In the unconstrained case, $\cM = \cP(Y)$ and $\tau_x=\id_Y$ for all $x\in X$, it follows that $C(x, \vartheta) = \int_Y c(x,y)\,\vartheta(\d y)$, and the weak optimal transport problem
$$\inf_{\kappa \in \ker(\mu, \nu)} \int_X C\big(x, \kappa(x)\big)\,\mu(\d x)$$
is equivalent to the classical optimal transport problem
$$\inf_{\pi \in \cpl(\mu, \nu)} \int_{X \times Y} c(x, y)\,\pi(\d x, \d y),$$
where $\cpl(\mu, \nu)$ is the set of couplings between $\mu$ and $\nu$, i.e., the set of all probability measures $\pi \in \cP(X \times Y)$ with $\prj_X(\pi) = \mu$ and $\prj_Y(\pi) = \nu$.\
In fact, on the one hand, for every $\kappa \in \ker(\mu, \nu)$, the condition $\mu \kappa = \nu$ implies that the measure $\pi \in \cP(X \times Y)$
given by $\pi(A \times B) := \int_A \kappa(x, B)\,\mu(\d x)$ for all $A\in \cB(X)$ and $B\in \cB(Y)$ is a coupling between $\mu$ and $\nu$. On the other hand, since $X$ and $Y$ are Borel spaces, for each $\pi \in \cpl(\mu, \nu)$, there exists a Borel measurable stochastic kernel $\kappa \colon X\times \cB(Y) \to [0,1]$ with 
$\pi(A\times B) = \int_A \kappa(x,B)\,\mu(\d x)$ for all $A \in \cB(X)$ and $B \in \cB(Y)$, cf.\ \cite[Corollary 7.27.2]{bertsekas1978stochastic}, which is an element of $\ker(\mu, \nu)$ since $\int_X \kappa(x,B)\,\mu(\d x) = \pi(X \times B) = \nu(B)$ for all $B \in \cB(Y)$.
\end{remark}

In the following, we focus on the general case, where $\cM$ is not necessarily $\cP(Y)$.\ Recall that a probability measure $\theta \in \cM$ is an extreme point of $\cM$, if $\theta = \lambda \theta_1 + (1 - \lambda) \theta_2$ for $\theta_1, \theta_2 \in \cM$ and $\lambda \in (0,1)$ implies $\theta = \theta_1 = \theta_2$.\ We denote by $\ex \cM$ the set of all extreme points of $\cM$.\ The set $\ex \cM$ is endowed with the smallest $\sigma$-algebra $\Sigma(\ex \cM)$ on $\ex \cM$ such that $\ex \cM \to [0,1]$, $\theta \mapsto \theta(B)$ is measurable for all $B \in \cB(Y)$.\ By \cite[Proposition 7.25]{bertsekas1978stochastic}, $\Sigma(\ex \cM)$ coincides with the trace $\sigma$-algebra of $\cB(\cP(Y))$ on $\ex\cM$, which is the same as the Borel $\sigma$-algebra of the subspace topology of the weak topology on $\ex\cM$.

The next result relies on a characterization given in \cite{weizsaecker1979irp,winkler1988extreme}. The set $\cM$ is said to satisfy the \textit{integral representation property (IRP)} if, for 
each $\vartheta \in \cM$, there exists a probability measure $p \colon \Sigma(\ex \cM) \to [0,1]$ such that
\[
\vartheta(B) = \int_{\ex \cM} \theta(B)\,p(\d \theta) \quad \text{for all } B \in \cB(Y).
\]

For the sake of illustration, we provide three examples for sets that satisfy the IRP, and refer to \cite{weizsaecker1979irp} for a more profound study of this property.

\begin{example}\ \label{ex.irp}
\begin{enumerate}[a)]
 \item Let $\cM \subset \cP(Y)$ be nonempty, convex, and weakly closed.\ Then, $\cM$ has the IRP, cf.\ \cite[Corollary 1]{weizsaecker1979irp}.
 \item Let $\phi\colon Y\to \R$ be Borel measurable. Then,
 $$\cM:=\big\{\vartheta \in \cP(Y)\colon \phi\in \mathscr{L}^1(\vartheta)\big\}$$ satisfies the IRP, cf.\  \cite[Corollary 3]{weizsaecker1979irp}. \label{ex.irpb}
 \item \label{ex.irp.moments} For each $m\in \N$, let $\phi_m\colon Y\to \R$ be Borel measurable and $a_m\in \R$. Then, the set $$\cM:=\bigg\{\vartheta \in \cP(Y)\colon \phi_m\in \mathscr L^1(\vartheta) \text{ and }\int_Y \phi_m(y)\,\vartheta(\d y)\leq a_m \text{ for all }m\in \N\bigg\}$$
 has the IRP, cf.\ \cite[Corollary 3]{weizsaecker1979irp}.
\end{enumerate}
\end{example}

The IRP is of particular interest in our framework since, for every \textit{measure affine} functional on $\cM$, the supremum with respect to probability measures in $\cM$ can be restricted to the supremum over $\ex \cM$, see \cite[Theorem 3.2 and Proposition 3.1]{winkler1988extreme}.\
In fact, for $\vartheta\in\cP(Y)$ and every probability measure $p\colon \Sigma(\ex \cM)\to [0,1]$ with
$\vartheta(B)= \int_{\ex \cM} \theta(B)\,p(\d \theta)$ for all $B\in \cB(Y)$, measure-theoretic induction yields the \textit{barycentrical formula}\footnote{Recall that $\vartheta g:=\int_Y g^+(y)\,  \vartheta(\d y)- \int_Y g^-(y)\, \vartheta(\d y)$ with the convention $\infty-\infty=-\infty$.}
$$
\vartheta g=\int_{\ex \cM} \theta g\,p(\d \theta)\quad \text{for all } g\in B(Y).
$$
In particular, if $\cM\subset \cP(Y)$ satisfies the IRP and $g\in \Bm(Y)$, then $\vartheta g\le \sup_{\theta\in \ex \cM} \theta g$ for all $\vartheta\in\cM$.\ Taking the supremum over all $\vartheta\in\cM$, we get
 \begin{equation}\label{eq:barycenter}
    \sup_{\vartheta \in \cM} \int_Y g(y)\,\vartheta(\d y) = \sup_{\theta \in \ex \cM} \int_Y g(y)\,\theta(\d y).
\end{equation}
In combination with Theorem~\ref{thm.main}, we obtain the following result, which makes the computation of the $C$-conjugate $f^C$ more tractable.

\begin{theorem} \label{thm.reduction.extr}
Assume that the cost function $C$ is of the form \eqref{eq.transport.cost}, and let $\ex \cM \subset \Theta \subset\cM$. If $\cM$ satisfies the IRP, then
\[
\rho(f) = \int_X \sup_{\theta \in \Theta} \bigg(\int_Y\tau_x\big(f(y) - c(x,y) \big)\,\theta(\d y)\bigg)\,\overline{\mu}(\d x)\quad\text{for all }f\in \cL.
\]
\end{theorem}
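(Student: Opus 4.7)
The plan is to combine Theorem \ref{thm.main} with the barycentrical formula \eqref{eq:barycenter} via a simple sandwich argument. By Theorem \ref{thm.main} we have the identity $\rho(f) = \int_X f^C(x)\,\overline\mu(\d x)$ for every $f \in \cL$. The computation carried out just before Remark \ref{rem:transport}, which uses only the bijectivity of $\vartheta \mapsto \tau_x(\vartheta)$ on $\cP(Y)$ and the fact that $C(x, \tau_x(\vartheta)) < \infty$ iff $\vartheta \in \cM$, already gives the explicit form
$$f^C(x) = \sup_{\vartheta \in \cM} \int_Y \tau_x\bigl(f(y) - c(x,y)\bigr)\,\vartheta(\d y) \quad\text{for all } x \in X.$$

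Next, for each fixed $x \in X$ I define the Borel measurable function $g_x\colon Y \to \overline\R$ by $g_x(y) := f(\tau_x(y)) - c(x, \tau_x(y))$, using the convention $\infty - \infty = -\infty$. Borel measurability of $g_x$ follows from that of $f$, $c$, and $y \mapsto \tau_x(y)$ (the last being a consequence of the joint Borel measurability of $(x,y) \mapsto \tau_x(y)$). Since $\cM$ satisfies the IRP, the barycentrical formula \eqref{eq:barycenter} applied to $g_x$ yields
$$\sup_{\vartheta \in \cM} \int_Y g_x(y)\,\vartheta(\d y) = \sup_{\theta \in \ex\cM} \int_Y g_x(y)\,\theta(\d y).$$
Because $\ex\cM \subset \Theta \subset \cM$, the quantity $\sup_{\theta \in \Theta} \int_Y g_x\,\d\theta$ lies between these two equal terms, so all three suprema coincide. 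Substituting this into the formula for $f^C(x)$ and then into Theorem \ref{thm.main} produces the claimed representation.

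The only point requiring mild care is that the outer $\overline\mu$-integral on the right-hand side of the claim is well-defined; this is automatic because the integrand coincides pointwise with $f^C$, which is upper semianalytic (hence universally measurable) and belongs to $\mathscr L^1(\overline\mu)$ since $f \in \cL$. There is no genuine obstacle: the substance of the argument is just that the barycentrical formula reduces the $\cM$-supremum to an $\ex\cM$-supremum, and the sandwich through $\Theta$ is immediate. Consequently, the proof reduces to chaining the displayed identities and quoting Theorem \ref{thm.main}.
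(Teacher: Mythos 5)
Your proposal is correct and follows essentially the same route as the paper: both arguments combine Theorem \ref{thm.main} with the explicit form of $f^C$ obtained via the bijection $\vartheta\mapsto\tau_x(\vartheta)$, apply the barycentrical identity \eqref{eq:barycenter} to reduce the supremum over $\cM$ to one over $\ex\cM$, and sandwich the supremum over $\Theta$ in between. The measurability remark (the integrand coincides with the universally measurable $f^C\in\mathscr L^1(\overline\mu)$) matches how the paper justifies the outer integral, so nothing is missing.
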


\begin{proof}
Let $f\in\cL$.\ Using equation~\eqref{eq:barycenter}, it follows that
 \begin{align*}
    f^C(x) & = \sup_{\vartheta \in \cM} \Big(\tau_x (\vartheta) f - C\big(x, \tau_x(\vartheta) \big)\Big) \ge \sup_{\theta \in \Theta}  \Big( \tau_x(\theta) f - C\big(x, \tau_x(\vartheta)\big) \Big)\\
    &\ge \sup_{\vartheta \in \ex \cM} \Big(\tau_x(\vartheta) f - C\big(x, \tau_x(\vartheta) \big)\Big)
     = \sup_{\vartheta \in \ex \cM}  \int_Y \tau_x\big(f(y) - c(x, y)\big)\,\vartheta(\d y) \\
     &=\sup_{\vartheta \in \cM} \int_Y \tau_x\big(f(y) - c(x, y)\big)\,\vartheta(\d y)=f^C(x).
    \end{align*}
  Hence,
  $$
  f^C(x)=\sup_{\theta \in \Theta}  \Big( \tau_x(\theta )f - C\big(x, \tau_x(\theta) \big) \Big)=\sup_{\vartheta \in \Theta}  \int_Y \tau_x\big(f(y) - c(x, y)\big)\,\theta(\d y) \quad\text{for all }x\in X. 
  $$
  Since $f^C$ is universally measurable, the claim follows from Theorem \ref{thm.main}.
 \end{proof}

We conclude this section with the following proposition, which is another consequence of Theorem \ref{thm.main} and is closely connected to \cite[Theorem 2.4]{bartl2020computational}.

\begin{proposition}\label{prop.reduction.extr}
Suppose that $\cM$ contains all Dirac measures on $Y$ and that the cost function $C$ is of the form \eqref{eq.transport.cost}.\ Then, 
$$
\rho(f)  =  \int_X \sup_{y \in Y} \big(f(y) - c(x,y) \big)\,\overline\mu(\d x)=\int_X f^c(x)\, \overline\mu(\d x)\quad\text{for all }f\in \cL.
$$
 \end{proposition}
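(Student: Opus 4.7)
The plan is to invoke Theorem \ref{thm.main}, which reduces the claim to showing the pointwise identity $f^C(x) = f^c(x)$ for all $x \in X$. Once this is established, the second equality in the statement is exactly the second equality in \eqref{eq.main}, and the first equality is just the definition of the classical $c$-transform. So the whole proposition collapses to computing $f^C$ explicitly under the stated structural assumptions.

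For the inequality $f^C(x) \geq f^c(x)$, I would test the supremum defining $f^C(x)$ against Dirac measures. For each $y \in Y$, since $\tau_x\colon Y\to Y$ is a bijection, the push-forward under $\tau_x^{-1}$ of $\delta_y$ is the Dirac measure $\delta_{\tau_x^{-1}(y)}$, which lies in $\cM$ by the standing hypothesis that all Dirac measures belong to $\cM$. Consequently, \eqref{eq.transport.cost} gives $C(x,\delta_y)=\int_Y c(x,z)\,\delta_y(\d z)=c(x,y)$, and therefore
\[
f^C(x) \geq \delta_y f - C(x,\delta_y) = f(y)-c(x,y).
\]
Taking the supremum over $y\in Y$ yields $f^C(x)\geq f^c(x)$.

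For the reverse inequality $f^C(x) \leq f^c(x)$, I would split on the constraint defining $C$. If $\tau_x^{-1}(\vartheta)\notin \cM$, then $C(x,\vartheta)=\infty$ and the convention $\infty-\infty=-\infty$ gives $\vartheta f - C(x,\vartheta)=-\infty\leq f^c(x)$. If $\tau_x^{-1}(\vartheta)\in \cM$, then $\int_Y c(x,y)\,\vartheta(\d y)<\infty$ or equals $\infty$; in the former case linearity of integration gives $\vartheta f-C(x,\vartheta)=\int_Y(f(y)-c(x,y))\,\vartheta(\d y)\leq f^c(x)$ because the integrand is pointwise bounded by the constant $f^c(x)$, and in the latter case the expression is $-\infty$ by convention. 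Taking the supremum over $\vartheta\in\cP(Y)$ shows $f^C(x)\leq f^c(x)$.

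The main subtlety, which is minor rather than a true obstacle, lies in step three: making sure that the bookkeeping between $\vartheta f - C(x,\vartheta)$ and $\int (f-c(x,\cdot))\,\d\vartheta$ works consistently under the convention $\infty-\infty=-\infty$ when $f$ is unbounded and $c(x,\cdot)$ is possibly infinite on a $\vartheta$-null set. This can be handled uniformly by observing that $f^C(x)\leq f^c(x)$ only needs to be verified when $\vartheta f - C(x,\vartheta)>-\infty$, which forces $C(x,\vartheta)<\infty$ and hence $c(x,\cdot)<\infty$ $\vartheta$-a.s., so the pointwise bound $f-c(x,\cdot)\leq f^c(x)$ integrates correctly. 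Putting the two inequalities together gives $f^C=f^c$, and plugging into Theorem \ref{thm.main} completes the proof; note that universal measurability of $f^c$ is automatic since it is upper semianalytic as a pointwise supremum of Borel functions.
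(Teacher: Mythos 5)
Your proof is correct, and its skeleton is the same as the paper's: reduce everything to the pointwise identity $f^C(x)=f^c(x)$ and then invoke Theorem~\ref{thm.main}, obtaining the lower bound $f^C\geq f^c$ by testing the supremum with Dirac measures (using that $\tau_x^{-1}(\delta_y)=\delta_{\tau_x^{-1}(y)}\in\cM$, so that \eqref{eq.transport.cost} gives $C(x,\delta_y)=c(x,y)$). Where you differ is in the upper bound $f^C\leq f^c$: the paper justifies this step through its extreme-point machinery, citing that every Borel probability measure on a Borel space is regular and that the Dirac measures are therefore the extreme points of $\cP(Y)$ (Bertsekas--Shreve and Tops{\o}e), so that the supremum of the measure-affine functional over $\cP(Y)$ may be restricted to Diracs; you instead use the elementary estimate that, whenever $C(x,\vartheta)<\infty$, one has $\vartheta f - C(x,\vartheta)=\int_Y\big(f(y)-c(x,y)\big)\,\vartheta(\d y)\leq f^c(x)$, together with the convention $\infty-\infty=-\infty$ when the cost is infinite. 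Your route is self-contained and avoids the regularity/extreme-point citations, at the price of the bookkeeping with the integrability conventions, which you handle correctly (if $f^c(x)<\infty$ and $\int_Y c(x,y)\,\vartheta(\d y)<\infty$, then $f\leq f^c(x)+c(x,\cdot)$ forces $\int_Y f^+\,\d\vartheta<\infty$, so the comparison of integrals is legitimate); the paper's route is a one-liner given the framework of Section~3, which it reuses anyway for Theorem~\ref{thm.reduction.extr} and Proposition~\ref{prop:selection}. One cosmetic remark: the universal measurability you mention at the end is not needed as a separate fact, since once $f^C=f^c$ pointwise it is inherited from the universal measurability of $f^C$ established in Section~2; and the correct reason $f^c$ is upper semianalytic is that it is the partial supremum of a Borel function on the product $X\times Y$, not merely a pointwise supremum of Borel functions.
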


 \begin{proof}
  By \cite[Proposition 7.17]{bertsekas1978stochastic}, every Borel probability measure on a Borel space is regular. Hence, the set of all probability measures on a Borel space has the Dirac measures as extreme points, see \cite[Theorem 11.1]{topsoe1970topology}. Since $\cM$ contains all the Dirac measures on $Y$, it follows that
   \begin{align*}
    f^C(x) & = \sup_{\vartheta \in \cM} \Big(\tau_x(\vartheta) f - C\big(x, \tau_x(\vartheta)\big)\Big) \geq \sup_{y \in Y} \tau_x\big(f(y) - c(x,y)\big)\\
    &=\sup_{\vartheta \in \cP(Y)} \big(\tau_x(\vartheta) f - C\big(x, \tau_x(\vartheta) \big)\Big) \geq f^C(x).
   \end{align*}
   We have therefore shown that 
   $$
   f^C(x)=\sup_{y \in Y} \tau_x\big(f(y) - c(x,y)\big)=\sup_{y \in Y} \big(f(y) - c(x,y)\big)=f^c(x)\quad\text{for all }x\in X,
   $$
   where the second equality follows from the fact that $\tau_x\colon Y\to Y$ is bijective for all $x\in X$, and the claim follows again from Theorem~\ref{thm.main}.
\end{proof}

\subsection{Generalized moment constraints} \label{sec.moments}
Theorem \ref{thm.reduction.extr} is particularly useful when the set of extreme points of $\cM$ has a simple and explicit representation.\ This is, for example, the case if we impose \textit{generalized moment constraints} in the sense that we consider the set 
\begin{equation} \label{eq.moments}
\cM = \Big\{ \vartheta \in \cP(Y)\,:\, f_i\in \mathscr L^1(\vartheta) 
\text{ and } \int_Y f_i(y)\,\vartheta(\d y) \le c_i \text{ for } 1 \le i \le n \Big\}
\end{equation}
with $n\in \N_0$, Borel measurable functions $f_1, \dots, f_n\colon Y\to \overline \R$, and $c_1, \dots, c_n\in \R$.\ As already remarked in Example \ref{ex.irp} \ref{ex.irp.moments}), the set $\cM$ satisfies the IRP.
As remarked in the proof of Proposition \ref{prop.reduction.extr}, the set of all Borel probability measures on $Y$ has the Dirac measures as extreme points.\ This allows us to use the following theorem by Winkler, see \cite[Theorem 2.1]{winkler1988extreme}, which we report here in the special case $\cP=\cP(Y)$.

\begin{theorem}[Winkler, 1988] \label{thm:winkler}
    Let $\cM \subset \cP(Y)$ be defined as in \eqref{eq.moments}.\ Then, the set $\cM$ is convex and
    \begin{align*}
    \ex \cM \subset  \Big\{ &\vartheta \in \cM\colon \vartheta = \sum_{i = 1}^m a_i \delta_{y_i} \text{ with } y_i \in Y,\, a_i > 0,\, \sum_{i=1}^m a_i = 1,\, 1 \le m \le n+1,\\
	&\text{and the vectors } (f_1(y_i),\dots, f_n(y_i), 1),\, 1 \le i \le m, \text{ are linearly independent} \,\Big\}.
\end{align*}
    Equality of sets holds if the moment conditions in \eqref{eq.moments} are given by equalities.
\end{theorem}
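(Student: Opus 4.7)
The plan is to handle the three assertions in turn: convexity of $\cM$, the inclusion of $\ex\cM$ into the described set of finitely supported measures, and equality of sets in the equality-constraint case. Convexity of $\cM$ is immediate from linearity of the integral: for $\vartheta_1,\vartheta_2\in\cM$ and $\lambda\in[0,1]$, integrability of each $f_i$ with respect to $\lambda\vartheta_1+(1-\lambda)\vartheta_2$ and the bound $\int f_i\,\d(\lambda\vartheta_1+(1-\lambda)\vartheta_2)\le c_i$ both follow from the corresponding properties of $\vartheta_1$ and $\vartheta_2$.

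For the characterization of $\ex\cM$, my approach combines a perturbation argument with a functional-analytic reduction. Let $\vartheta\in\ex\cM$ and denote by $I:=\{i\in\{1,\dots,n\}\colon\int f_i\,\d\vartheta=c_i\}$ the set of active constraints. For every bounded Borel function $g\colon Y\to\R$ with $\int g\,\d\vartheta=0$ and $\int f_ig\,\d\vartheta=0$ for all $i\in I$, the signed measures $\vartheta^{\pm}_\eps:=(1\pm\eps g)\vartheta$ are probability measures in $\cM$ for all sufficiently small $\eps>0$: total mass one follows from $\int g\,\d\vartheta=0$, non-negativity from boundedness of $g$, the active constraints are preserved as equalities, and the slack in the inactive ones absorbs small perturbations by continuity. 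Since $\vartheta=\tfrac12\vartheta^+_\eps+\tfrac12\vartheta^-_\eps$, extremality forces $g=0$ $\vartheta$-almost everywhere. Hence the finite-dimensional subspace $V:=\mathrm{span}\{1,f_i\colon i\in I\}\subset L^1(\vartheta)$ has trivial annihilator in $L^\infty(\vartheta)$, so by Hahn--Banach it is dense; being finite-dimensional, $V$ is closed, and thus $L^1(\vartheta)=V$ has dimension $m\le|I|+1\le n+1$. Consequently, $\vartheta$ is purely atomic with exactly $m$ atoms $y_1,\dots,y_m$, and the evaluation map $V\to\R^m$ is an isomorphism. In particular, the matrix with rows $(f_i(y_j))_{i\in I}$ together with the constant row of ones has rank $m$, which forces the longer vectors $(f_1(y_j),\dots,f_n(y_j),1)\in\R^{n+1}$, $j=1,\dots,m$, to be linearly independent.

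For the reverse inclusion under equality constraints, I would show that every $\vartheta=\sum_{j=1}^m a_j\delta_{y_j}\in\cM$ with linearly independent vectors $(f_1(y_j),\dots,f_n(y_j),1)$ is extreme. Indeed, if $\vartheta=\lambda\vartheta_1+(1-\lambda)\vartheta_2$ with $\lambda\in(0,1)$ and $\vartheta_1,\vartheta_2\in\cM$, then $\vartheta_k\ll\vartheta$, so $\vartheta_k=\sum_{j=1}^m b_{k,j}\delta_{y_j}$. The equalities $\sum_j f_i(y_j)b_{k,j}=c_i$ for $i=1,\dots,n$ together with $\sum_j b_{k,j}=1$ form an $(n+1)\times m$ linear system whose coefficient matrix has linearly independent columns by hypothesis; uniqueness of the solution forces $b_{1,j}=b_{2,j}=a_j$ for every $j$, hence $\vartheta_1=\vartheta_2=\vartheta$. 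The main obstacle is the interplay between the perturbation step and the functional-analytic reduction: one has to define $V$ using precisely the active constraints so that the perturbed measures remain admissible, yet still recover linear independence of the full $(n+1)$-tuples, and one must carefully argue that finite-dimensionality of $L^1(\vartheta)$ forces finite atomicity of $\vartheta$ (any non-atomic part would furnish an infinite sequence of disjoint positive-measure sets, hence an infinite-dimensional subspace of indicators).
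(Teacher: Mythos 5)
Your argument is correct, but note that the paper does not prove this statement at all: it is imported verbatim from Winkler's 1988 paper (Theorem 2.1 there, specialized to $\cP=\cP(Y)$), so there is no internal proof to compare against. What you have written is essentially the classical route that underlies the cited result: the perturbation $\vartheta^{\pm}_\eps=(1\pm\eps g)\vartheta$ together with extremality shows that the annihilator of $V=\mathrm{span}\{1,f_i\colon i\in I\}$ in $L^\infty(\vartheta)=(L^1(\vartheta))^*$ is trivial (this is the ``only if'' half of Douglas' density characterization of extremal measures), Hahn--Banach plus closedness of the finite-dimensional $V$ gives $L^1(\vartheta)=V$, finite-dimensionality forces finite atomicity, and the evaluation isomorphism yields full column rank, hence linear independence of the truncated vectors and a fortiori of the full vectors $(f_1(y_j),\dots,f_n(y_j),1)$; your converse under equality constraints via absolute continuity and uniqueness of the solution of the $(n+1)\times m$ system is also the standard argument and is sound. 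Three small points would tighten it: (i) the step ``finite-dimensional $L^1(\vartheta)$ implies $\vartheta=\sum_j a_j\delta_{y_j}$'' uses that atoms of a Borel probability on a Borel space are point masses (the $\sigma$-algebra is countably generated and separates points); this is where the standing assumption on $Y$ enters and deserves a sentence. (ii) The claimed \emph{equality} of sets in the equality-constraint case needs both inclusions for the equality-constrained $\cM$; your forward argument does apply verbatim there (all constraints are active, so the perturbed measures preserve them), but you should say so explicitly rather than only proving the reverse inclusion. (iii) The identification $(L^1(\vartheta))^*=L^\infty(\vartheta)$ is legitimate because $\vartheta$ is a probability measure, and the finiteness of $f_i(y_j)$ for the extended-real-valued $f_i$ follows from $f_i\in\mathscr L^1(\vartheta)$ and $a_j>0$; both are worth a remark since the paper allows $f_i\colon Y\to\overline\R$. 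Compared with simply invoking Winkler, your proof is self-contained and elementary, at the price of the generality of Winkler's formulation (which covers more general base sets $\cP$ than $\cP(Y)$).
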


 The following proposition, which is based on Theorem \ref{thm:winkler}, is a helpful tool for the numerical computation of the risk measure $\rho$ using neural networks.

\begin{proposition} \label{prop:selection}
 Assume that the cost function $C$ is of the form \eqref{eq.transport.cost} with $\cM$ given by \eqref{eq.moments}.\ Then, 
\begin{align}
    \rho(f) &=\sup_{(a,y)\in \Bm(X,\Theta)}\int_X \sum_{i=1}^{n+1}a_i(x)\tau_x\Big(f\big(y_i(x)\big)-c\big(x,y_i(x)\big)\Big)\,\mu(\d x)\notag \\
    & \label{eq.prop.moments} = \int_X \sup_{(a,y)\in \Theta} \sum_{i=1}^{n+1} a_i\tau_x\big(f(y_i)-c(x,y_i)\big), \overline\mu(\d x)\quad\text{for all }f\in \cL,
\end{align} 
 where 
 \[
 \Theta:=\bigg\{ (a,y)\in [0,1]^{n+1}\times Y^{n+1}\colon \sum_{i = 1}^{n+1} a_i \delta_{y_i} \in \cM \bigg\}.
 \]
\end{proposition}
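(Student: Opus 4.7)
The plan is to derive both equalities from Theorem~\ref{thm.reduction.extr} combined with Winkler's parameterization of extreme points (Theorem~\ref{thm:winkler}), and a measurable selection argument in the spirit of the proof of Theorem~\ref{thm.main}.

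First, I would note that $\Theta$ is a Borel subset of $[0,1]^{n+1}\times Y^{n+1}$: membership is equivalent to $a_i\geq 0$, $\sum_{i=1}^{n+1}a_i=1$, and $\sum_{i=1}^{n+1} a_i f_j(y_i)\leq c_j$ for $j=1,\dots,n$, which are all Borel conditions. Setting $\Theta_0:=\big\{\sum_{i=1}^{n+1} a_i\delta_{y_i}:(a,y)\in\Theta\big\}\subset\cM$, Theorem~\ref{thm:winkler} shows that every extreme point of $\cM$ is, after padding with zero weights, an element of $\Theta_0$. Thus $\ex\cM\subset\Theta_0\subset\cM$, and Theorem~\ref{thm.reduction.extr} gives
\[
f^C(x)=\sup_{\theta\in\Theta_0}\int_Y\tau_x\big(f(y)-c(x,y)\big)\,\theta(\d y)=\sup_{(a,y)\in\Theta}\sum_{i=1}^{n+1} a_i\tau_x\big(f(y_i)-c(x,y_i)\big)
\]
for all $x\in X$. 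Integrating against $\overline\mu$ yields the second equality in \eqref{eq.prop.moments} via Theorem~\ref{thm.main}.

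For the first equality, I would prove two inequalities. The easy direction: for any $(a,y)\in\Bm(X,\Theta)$, the integrand $h_{(a,y)}(x):=\sum_{i=1}^{n+1} a_i(x)\tau_x\big(f(y_i(x))-c(x,y_i(x))\big)$ is Borel measurable and pointwise dominated by $f^C(x)$, hence $\int_X h_{(a,y)}\,\d\mu=\int_X h_{(a,y)}\,\d\overline\mu\leq\int_X f^C\,\d\overline\mu=\rho(f)$; taking the supremum gives one inequality.

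The main obstacle is the reverse inequality, which requires a Borel measurable $\varepsilon$-optimizer. Fix $\varepsilon>0$. Since the map $(x,a,y)\mapsto\sum_{i=1}^{n+1} a_i\tau_x\big(f(y_i)-c(x,y_i)\big)$ is Borel on $X\times\Theta$, its pointwise supremum $f^C$ over the Borel section $\Theta$ is upper semianalytic, so \cite[Proposition 7.50]{bertsekas1978stochastic} yields a universally measurable $\varepsilon$-selection $(\tilde a,\tilde y)\colon X\to\Theta$. Applying \cite[Lemma 7.28]{bertsekas1978stochastic} exactly as in the proof of Theorem~\ref{thm.main}, one obtains a Borel measurable $(a,y)\in\Bm(X,\Theta)$ coinciding with $(\tilde a,\tilde y)$ $\mu$-almost surely; integrating, using that $\mu$ and $\overline\mu$ agree on Borel integrands, and letting $\varepsilon\downarrow 0$ gives $\rho(f)\leq\sup_{(a,y)\in\Bm(X,\Theta)}\int_X h_{(a,y)}\,\d\mu$. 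The only subtlety is that the selection results must be invoked with target space $\Theta$ in place of $\cP(Y)$, but since $\Theta$ is a Borel subset of a Borel space, they apply verbatim.
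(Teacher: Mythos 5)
Your proposal is correct and follows essentially the same route as the paper's proof: reduction to the parameterized extreme points via Theorem~\ref{thm.reduction.extr} and Winkler's theorem, the easy inequality by pointwise domination of the integrand by $f^C$, and the reverse inequality via an $\eps$-optimal selection obtained from \cite[Proposition 7.50 and Lemma 7.28]{bertsekas1978stochastic}. The only (cosmetic) difference is that you justify the applicability of the selection results by showing that $\Theta$ is a Borel subset of $[0,1]^{n+1}\times Y^{n+1}$, whereas the paper argues that $\Theta$ is closed; both observations make $\Theta$ a Borel space, so the argument is otherwise identical.
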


\begin{proof}
 By Theorem \ref{thm.reduction.extr}, it follows that
 \[
 \rho(f)=\int_X\sup_{(a,y)\in \Theta} \sum_{i=1}^{n+1}a_i\tau_x\big(f(y_i)-c(x,y_i)\big)\, \overline\mu(\d x)\quad\text{for all }f\in \cL.
 \]
 In particular, $\geq$ holds in \eqref{eq.prop.moments}.\ Since the map from $[0,1]\times Y$ to the set of all subprobability measures on $\cB(Y)$, given by $(a,y)\mapsto a\delta_y$, is continuous and $\cM$ is weakly closed, it follows that $\Theta\subset [0,1]^{n+1}\times Y^{n+1}$ is closed.\ Hence, by \cite[Lemma 7.28 and Proposition 7.50]{bertsekas1978stochastic}, for all $f\in \cL$ and $\eps>0$, there exists a measurable map $(a,y)\colon X\to \Theta$ with
 \[
 \rho(f)\leq \int_X\sum_{i=1}^{n+1}a_i(x)\tau_x\Big(f\big(y_i(x)\big)-c\big(x,y_i(x)\big)\Big)\,\mu(\d x)+\eps.
 \]
 Taking the supremum over all measurable maps $X\to \Theta$ and letting $\eps\downarrow 0$, the claim follows. 
\end{proof}

 While Theorem \ref{thm.main} shows that the risk measure $\rho$ can be computed by solving a pointwise optimization that is independent of the reference measure $\mu$, Proposition \ref{prop:selection} suggests a variational approach, where the global behaviour of the loss function $f$, the cost $c$, and the reference measure $\mu$ are considered simultaneously.\ Since the maximization involves an integral with respect to the reference distribution, this optimization aligns with the framework of stochastic gradient descent methods, where tools such as neural networks typically perform very well.\ In Section \ref{sec.numerics} below, we will show how a neural network approximation can be formally justified in this context.

We conclude this section with the following corollary for a general set $\cM$ containing all Dirac measures.

\begin{corollary}\label{cor:selection}
 Suppose that $\cM$ contains all Dirac measures on $Y$ and that the cost function $C$ is of the form \eqref{eq.transport.cost}.\ Then, 
$$
\rho(f)  =  \sup_{y\in \Bm(X,Y)}\int_X f\big(y(x)\big) - c\big(x,y(x)\big)\,\mu(\d x)\quad\text{for all }f\in \cL.
$$
\end{corollary}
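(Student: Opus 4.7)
The plan is to deduce the corollary from Proposition \ref{prop.reduction.extr}, which under the stated hypotheses already yields $\rho(f)=\int_X f^c(x)\,\overline\mu(\d x)$ with $f^c(x)=\sup_{y\in Y}\big(f(y)-c(x,y)\big)$. It then remains to identify this integral with the supremum over Borel measurable selections $y\in \Bm(X,Y)$.

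First I would establish the easy inequality $\le$. For any $y\in \Bm(X,Y)$, the map $x\mapsto f(y(x))-c(x,y(x))$ is Borel measurable on $X$, and by the very definition of $f^c$ the pointwise bound $f(y(x))-c(x,y(x))\le f^c(x)$ holds for every $x\in X$. Since $f\in\cL$ implies $f^c\in\mathscr L^1(\overline\mu)$, integration against $\mu$ yields
\[
\int_X f(y(x))-c(x,y(x))\,\mu(\d x)\le \int_X f^c(x)\,\overline\mu(\d x)=\rho(f),
\]
and taking the supremum over $y\in \Bm(X,Y)$ gives the inequality $\le$.

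For the reverse inequality I would mimic the selection argument in the proof of Theorem \ref{thm.main}. Fix $\eps>0$. The function $(x,y)\mapsto f(y)-c(x,y)$ is Borel measurable on $X\times Y$ (with the convention $\infty-\infty=-\infty$), and its $y$-supremum $f^c$ is upper semianalytic by \cite[Proposition 7.47]{bertsekas1978stochastic}. Hence the measurable selection theorem \cite[Proposition 7.50]{bertsekas1978stochastic} furnishes a universally measurable map $\vartheta\colon X\to Y$ satisfying
\[
f\big(\vartheta(x)\big)-c\big(x,\vartheta(x)\big)\ge f^c(x)-\eps\quad\text{for all }x\in X
\]
(with the standard modification on the $\overline\mu$-null set where $f^c=\infty$, which is empty up to a null set since $f\in\cL$). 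Then \cite[Lemma 7.28]{bertsekas1978stochastic} produces $y\in \Bm(X,Y)$ coinciding with $\vartheta$ outside a $\mu$-null set, so that
\[
\int_X f\big(y(x)\big)-c\big(x,y(x)\big)\,\mu(\d x)\ge \int_X f^c(x)\,\overline\mu(\d x)-\eps=\rho(f)-\eps.
\]
Passing to the supremum over $y\in\Bm(X,Y)$ and letting $\eps\downarrow 0$ yields $\ge$, completing the proof.

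The only subtlety I would foresee is the standard one in the universally-measurable-selection machinery: carefully handling the possibility that $f^c$ takes the value $+\infty$ and ensuring the replacement of a universally measurable selector by a Borel selector (via \cite[Lemma 7.28]{bertsekas1978stochastic}) changes the integrand only on a $\mu$-null set. Both issues are harmless thanks to $f\in\cL$, which guarantees $f^c\in\mathscr L^1(\overline\mu)$, exactly as in the closing steps of the proof of Theorem \ref{thm.main}.
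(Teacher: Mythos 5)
Your argument is correct and follows essentially the same route as the paper: Proposition \ref{prop.reduction.extr} to reduce $\rho(f)$ to $\int_X f^c\,\overline\mu(\d x)$, and then the Bertsekas--Shreve selection machinery (\cite[Proposition 7.50]{bertsekas1978stochastic} plus \cite[Lemma 7.28]{bertsekas1978stochastic}) to pass to Borel selectors. The only difference is cosmetic: the paper cites Proposition \ref{prop:selection} in the unconstrained case for this second step, whereas you inline exactly the selection argument on which that proposition rests.
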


\begin{proof}
 By Proposition \ref{prop.reduction.extr}, it follows that
 \[
 \rho(f)  = \int_X\sup_{y\in Y}\big(f(y)-c(x,y)\big)\,\overline\mu(\d y).
 \]
 Hence, using Proposition \ref{prop:selection} in the unconstrained case, the claim follows. 
\end{proof}

\section{Examples and applications}\label{sec:examples}
In this section, we collect several examples that clarify the use of weak transport penalties, and discuss potential applications for the related risk measure.

\subsection{Parametric uncertainty} \label{sec:parametric.unc}
In this subsection, we present a setup that allows to incorporate special forms of parametric uncertainty via a weak optimal transport penalization.\ To that end, let $\mu\in\cP(X)$ and $\Theta$ be a $\sigma$-compact\footnote{Recall that a topological space $\Theta$ is $\sigma$-compact if there exists a sequence of compact subsets $(\Theta_n)_{n\in \N}$ of $\Theta$ with $\Theta=\bigcup_{n\in \N}\Theta_n$.} Hausdorff topological space, endowed with the Borel $\sigma$-algebra.\ Moreover, we assume that there exists a continuous and injective map $\gamma\colon \Theta\to \cP(Y),\, \theta \mapsto \gamma_\theta$, and define $\cM:=\gamma(\Theta)=\{\gamma_\theta\colon \theta\in \Theta\}$. Since $\gamma$ is continuous and $\Theta$ is $\sigma$-compact, it follows that $\cM$ is Borel measurable and that the inverse $\gamma^{-1}\colon \cM\to \Theta$ is Borel measurable.\footnote{This follows directly from the fact that the image of a compact set under a continuous map is compact and therefore $(\gamma^{-1})^{-1}(A)=\gamma(A)$ is a countable union of compacts for every closed subset $A\subset \Theta$.\ In particular, $\cM=\gamma(\Theta)$ is a countable union of compacts.}\ We consider a Borel measurable cost function $c\colon X\times \Theta\to [0,\infty]$, and assume that there exists a measurable map $\theta_0\colon X\to \Theta$ such that $c\big(x,\theta_0(x)\big)=0$ for $\mu$-almost all $x\in X$.\ For $x\in X$ and $\vartheta\in \cP(Y)$, we define
\[
C(x,\vartheta):=\begin{cases}
c\big(x,\gamma^{-1}(\vartheta)\big),&\text{if }\vartheta\in \cM,\\
\infty,&\text{otherwise}.
\end{cases}
\]
Since $c$, $\cM$, and $\gamma^{-1}$ are Borel measurable, it follows that $C\colon X\times \cP(Y)\to [0,\infty]$ is Borel measurable.\
Then, by Theorem \ref{thm.main},
$$
\rho(f)=\int_{X} \sup_{\theta\in \Theta} \big(\gamma_\theta f-c(x,\theta)\big)\,\overline\mu(\d x)\quad\text{for all }f\in \cL.
$$
Then, again by Theorem \ref{thm.main},
$$
\rho(f)=\sup_{\theta\in \Bm(X,\Theta)}\int_{X} \Big(\gamma_{\theta(x)} f-c\big(x,\theta(x)\big)\Big)\,\mu(\d x)\quad\text{for all }f\in \cL.
$$

\begin{example}
 Let $X=Y=\R$, $(\Omega,\cF,\P)$ be a probability space, and $\xi\colon \Omega\to \R$ be a random variable with $\E_\P(\xi)=0$.\ For $\theta=(m,\sigma)\in \Theta:=\R\times [0,\infty)$, let
 \[
 \gamma_\theta:=\P\circ (m+\sigma \xi)^{-1}.
 \]
Then, the map $\gamma\colon \Theta\to \cP(\R),\,\theta\mapsto \gamma_\theta$ is continuous, and we define the cost
\[
c(x,\theta):=\phi\big(|m-x|^2\big)+\psi\big(|\sigma|^2\big)\quad\text{for }x\in X \text{ and }\theta=(m,\sigma)\in \Theta
\]
with nondecreasing cost functions $\phi,\psi\colon [0,\infty)\to [0,\infty]$ with $\phi(0)=\psi(0)=0$.\ Defining $\theta_0(x):=(x,0)$ for all $x\in \R$, it follows that $c\big(x,\theta_0(x)\big)=0$ for all $x\in \R$.
\end{example}

\subsection{Wasserstein uncertainty} \label{sec:wass.unc}
In this section, we consider penalties, given in terms of Wasserstein distances. To that end, let $Y=X$ be a separable Banach space and  
$\cP_p(X)$ denote the set of all probability measures $\nu\in\cP(X)$ with finite moment $\int_X \|x\|^p\,\nu(\d x) < \infty$ of order $p\in[1,\infty)$. We assume that $\mu \in \cP_p(X)$ and consider the cost function
$$
C(x, \vartheta) := \int_X \|y - x\|^p\,\vartheta(\d y)\quad\text{for all }x\in X\text{ and }\vartheta \in \cP(X).
$$
Then, the cost $C$ is of the form \eqref{eq.transport.cost} with $c(x,y)=\|x-y\|^p$ and $\cM=\cP(X)$ and $\tau_x=\id_X$ for all $x\in X$.\
%or, equivalently, $\cM=\cP_p(X)$. 
Moreover, $\kappa_0(x):=\delta_x$ satisfies $C(x,\kappa_0(x))=0$ for all $x\in X$ and we end up with the risk measure 
\begin{align*}
    \rho(f) & = \sup_{\nu \in \cP(X)} \bigg( \int_X f(y)\,\nu(\d y) - \inf_{\kappa \in \ker(\mu, \nu)} \int_X C\big((x, \kappa(x)\big)\,\mu(\d x)\bigg) \\
        & = \sup_{\nu \in \cP_p(X)} \bigg( \int_X f(y)\,\nu(\d y) - \cW_p(\mu, \nu)^p \bigg)\quad\text{for all }f\in \Bm(X),
\end{align*}
where $\cW_p(\mu, \nu) := \inf_{\pi \in \cpl(\mu, \nu)} \big( \int_X \|y - x\|^p\,\pi(\d x, \d y)\big)^{1/p}$ is the Wasserstein distance of order $p$ between $\mu$ and $\nu$.\ Different aspects of risk measures based on transport distances have recently been studied in \cite{bartl2021sensitivity,bartl2020computational,nendel2022parametric}.\ We also refer \cite{bartl2021randomwalks,bart2023sensitivityadapted, nendel2023wasserstein} for dynamic versions of these risk measures and to \cite{gao2023distributionally,esfahani2018data,pflug2007ambiguity} for applications in the context of distributionally robust optimization.\ Since $\cP_p(X)$ contains all Dirac measures, by Proposition~\ref{prop.reduction.extr}, 
\[
\rho(f) = \int_X \sup_{y \in X}\big(f(y) - \|x-y\|^p\big) \,\overline\mu(\d x)= \int_X \sup_{y \in X}\big(f(x+y) - \|y\|^p\big) \,\overline\mu(\d x)
\]
for all $f\in \cL$.

Now, let $f\in \Bm(X)$ and assume that there exist constants $M\geq0$ and $c\in [0,1)$ such that
\begin{equation}\label{eq.growthf}
|f(x)|\leq M+c \|x\|^p\quad \text{for all }x\in X.
\end{equation}
Since $\mu\in \cP_p(X)$, it follows that $f\in \mathscr L^1(\mu)$.\ Using the convexity of the map $[0,\infty)\to [0,\infty), \, v\mapsto v^p$, it follows that
\[
|f(x+y)|\leq M+\frac{c}{\lambda^{p-1}} \|x\|^p+ \frac{c}{(1-\lambda)^{p-1}} \|y\|^p\quad\text{for all }x,y\in X\text{ and }\lambda\in (0,1).
\]
Now, let $\lambda\in (0,1)$ such that $b:=\frac{c}{(1-\lambda)^{p-1}}<1$ and $|\mu|_p:=\int \|x\|^p\,\mu(\d x)$. Then,
\[
\int_X \Big(f\big(x+y(x)\big)-\|y(x)\|^p\Big)\, \mu(\d x)\leq M+\frac{c}{\lambda^{p-1}}|\mu|_p-(1-b)\int_X \|y(x)\|^p \, \mu(\d x)
\]
for all $y\in B(X,X)$.\ In particular, $f\in \cL$ and, using Corollary \ref{cor:selection},
\begin{equation}\label{eq.wasserstein}
\rho(f)=\sup_{y\in L^p(\mu;X)}\int_X \Big(f\big(x+y(x)\big) - \|y(x)\|^p\Big) \,\mu(\d y), 
\end{equation}
where $L^p(\mu;X)$ denotes the space of all ($\mu$-equivalence classes of) functions $y\in \Bm(X,X)$ with $\int_X\|y(x)\|^p\,\mu(\d x)<\infty$.

\subsection{Drift uncertainty on the path space} Let $X=Y=C_0([0,1])$ be the space of all continuous functions $x\colon [0,1]\to \R^d$ with $x(0)=0$, endowed with the topology induced by the supremum norm.\ Let $\mu\in \cP(X)$ be the Wiener measure, and consider the cost function $c\colon X\times X\to [0,\infty]$, given by
\[
c(x,y):=\begin{cases}\phi\big(\|x-y\|_{H^1}^2\big),& \text{if }x-y\in H^1\\
\infty, &\text{otherwise},\end{cases}
\]
with a nondecreasing function $\varphi\colon[0,\infty]\to[0,\infty]$. Here, 
$H^1=H^1([0,1])$ is the Cameron-Martin space, i.e., the space of all $u\in X$ with weak derivative $u':=L^2=L^2([0,1])$, and $\|u\|_{H^1}:=\|u'\|_{L^2}$ is the Cameron-Martin norm for $u\in H^1$.\ Note that, by definition of $X$, $u(0)=0$ for all $u\in X$, so that the \textit{first Poincar\'e inequality}
\[
\|u\|_\infty\leq \int_0^1|u'(t)|\, \d t\leq  \|u'\|_{L^2}=\|u\|_{H^1}
\]
is valid for all $u\in H^1$.\ In particular, the cost function $c$ is Borel-measurable. 

In this case, the cost function $C(x, \vartheta) := \int_X c(x,y)\,\vartheta(\d y)$ is of the form \eqref{eq.transport.cost} with $\cM=\cP(X)$ and $\tau_x=\id_X$ for all $x\in X$. Moreover, $\kappa_0(x):=\delta_x$ satisfies $C(x,\kappa_0(x))=0$ for all $x\in X$.\ The corresponding risk measure is given by
\begin{equation}\label{eq.foellmerex}
 \rho(f):=\sup_{\nu\in \cP(X)} \bigg(\int_X f(y)\, \nu(\d y)-\inf_{\pi\in \cpl(\mu,\nu)} \int_{X\times X}c(x,y)\, \pi(\d x,\d y)\bigg) \
\end{equation}
for all $f\in \Bm(X)$.\ By Proposition~\ref{prop.reduction.extr}, it follows that
\begin{align*}
\rho(f)&=\int_X \sup_{u\in H^1} f(x+u)-\phi\big(\|u\|_{H^1}^2\big)\,\overline\mu(\d x)\\
&=\int_X \sup_{\theta\in L^2} \Bigg(f\bigg(x+\int_0^\cdot \theta_s\, \d s\bigg)-\phi\bigg(\int_0 ^1|\theta_s|^2\, \d s\bigg)\Bigg)\,\overline\mu(\d x) \quad\text{for all }f\in \cL.
\end{align*}
If $\phi(v)=\frac{v}{2}$, by \cite[Proposition 1]{foellmer2022optimal} or \cite[Proposition 2]{foellmer2022optimal}, it follows that
\begin{equation}\label{eq.foellmer}
\rho (f)\geq \sup_{\nu\in \cP(X)} \bigg(\int_X f(x)\, \nu(\d x)- H(\nu|\mu)\bigg)=\log\bigg(\int_X e^{f(x)}\, \mu(\d x)\bigg)
\end{equation}
for all $f\in \cL$, where $H(\nu|\mu)$ is the relative entropy of $\nu\in \cP(X)$ with respect to $\mu$.\ Note that, in general, equality in \eqref{eq.foellmer} only holds if the penalty term in \eqref{eq.foellmerex} is replaced by the Cameron-Martin adapted Wasserstein distance, see \cite[Corollary 2]{foellmer2022optimal} for the details.

\subsection{Martingale constraints} \label{sec:mart.constr}
We consider the case, where $X=Y=\R^d$ and the ambiguity set is given by all measures that satisfy a martingale constraint with respect to the reference measure $\mu\in \cP(\Rd)$.\ More precisely, we only consider probability measures $\nu \in \cP(X)$ that admit a martingale coupling with the reference measure $\mu$, i.e., a coupling $\pi \in \cpl(\mu, \nu)$ with
\begin{equation} \label{eq.mart.coupl}
\int_{\Rd \times \Rd} (y-x) \cdot \one_B(x)\,\pi(\d x, \d y) = 0 \quad \text{for all } B \in \cB (X).
\end{equation}
By Strassen \cite{strassen1965existence}, the existence of a martingale coupling between two measures $\mu$ and $\nu$ is equivalent to $\mu$ and $\nu$ being in convex order. We consider the risk measure
\begin{equation} \label{eq.static.mart}
\rho^\mart(f) := \sup_{\mu \in \cP(\R^d)} \bigg( \int_{\R^d} f(y)\,\nu(\d y) - \inf_{\pi \in \mart (\mu, \nu)} \int_{\R^d \times \R^d} c(x,y)\,\pi(\d x, \d y) \bigg)
\end{equation}
for $f\in \Bm(\R^d)$, where $c \colon [0,\infty) \to [0, \infty]$ is a Borel measurable cost function and $\mart(\mu, \nu)$ denotes the set of martingale couplings between $\mu$ and $\nu$.

Then, we can express this risk measure via a weak optimal transport penalty with cost function
\[
C(x,\vartheta) = \begin{cases}
    \int_{\R^d} c(x,y)\,\vartheta(\d y), &\text{if } \vartheta\tau_x^{-1} \in \cM_0, \\
    \infty, & \text{otherwise,}
\end{cases}
\]
with
$$
\cM_0 = \bigg\{ \vartheta \in \cP(\R^d)\colon \int_{\R^d} x\,\vartheta(\d x) = 0 \bigg\}
$$
and $\tau_x(y):=x+y$ for all $x,y\in \R^d$, i.e., $\tau_x^{-1}(y)=y-x$ for all $x,y\in \R^d$.

In fact, let $\nu\in \cP(\R^d)$, $\pi \in \mart(\mu, \nu)$, and $\kappa(x):=\pi(\, \cdot\, |x)$ for all $x\in X$.\ Then, $\kappa \in \ker(\mu, \nu)$ and $\kappa(x)\tau_x^{-1}\in \cM_0$.\ On the other hand, every kernel $\kappa \in \ker(\mu, \nu)$ with $\kappa(x)\tau_x^{-1} \in \cM_0$ gives rise to a coupling $\pi \in \mart(\mu, \nu)$ via $\pi (A\times B):= \int_A \kappa(x,B)\,\mu(\d x)$ for all $A,B\in \cB(\R^d)$. Hence,
\[
\inf_{\kappa \in \ker(\mu, \nu)} \int_{\R^d} C(x, \kappa(x))\,\mu(\d x) = \inf_{\pi \in \mart(\mu, \nu)} \int_{\Rd \times \Rd} c(x,y)\,\pi(\d x, \d y),
\]
and therefore $\rho^\mart = \rho$.

Observe that the set $\cM_0$ is defined by a single moment constraint, namely a mean constraint.\ Hence, Theorem \ref{thm:winkler} implies that
\[
\ex \cM_0 = \bigg\{ \nu \in \cP(\Rd) \,:\, \nu = p\delta_{x} + (1-p)\delta_{-\frac{p}{1-p}x},\, x \in X,\, p \in (0,1) \bigg\}.
\]
Therefore, by Proposition \ref{prop:selection}, the martingale constraint reduces to the following optimization
\begin{equation} \label{eq.mart.reduced}
\begin{split}
\rho(f) = \int_{\Rd} \sup_{\substack{y \in X \\ p \in (0,1)}} & \Bigg(p \big(f(x + y) - c(|y|)\big) \\
    & + (1-p)\bigg(f\Big(x - \tfrac{p}{1-p} y \Big) - c\Big(\tfrac{p}{1-p} |y|\Big) \bigg)\Bigg)\,\overline\mu(\d x).
\end{split}
\end{equation}
This means that the martingale constraint, which is an infinite dimensional constraint, on the state space $\R^d$ reduces to a $d+1$-dimensional optimization problem.\ In Section \ref{sec.numerics}, we will use this representation of the martingale constraint to compute no-arbitrage bounds for prices of financial derivatives after quoted maturities.

\subsection{Martingales on the Skorokhod space}
Let $X=Y=\R^d$, and consider the space $\Omega:=D=D([0,1])$ of all c\`adl\`ag functions $[0,1]\to \R^d$, endowed with the $\sigma$-algebra $\cF:=\sigma(\{\pr_t\colon t\in [0,1]\})$ generated by the canonical process $(\pr_t)_{t\in [0,1]}$.\ Let $\mu\in \cP(\R^d)$ with $\int_{\R^d}|x|^2\, \mu(\d x)<\infty$. Then, we denote by $\mart_D^2(\mu)$ the set of all martingale measures $\eta$ on $(\Omega,\cF)$ with $\eta\circ \pr_0^{-1}=\mu$ and $\int_\Omega |\omega_1|^2\,\eta(\d \omega)<\infty$. Consider the risk measure
\[
\rho(f):=\sup_{\eta\in \mart_D^2(\mu)} \int_\Omega f(\omega_1)-\langle \omega\rangle_1\, \eta(\d \omega),
\]
where $\langle \,\cdot\,\rangle$ denotes the quadratic variation.\ Then, by It\^o's formula,
\[
\int_\Omega \langle \omega\rangle_1\, \eta(\d \omega)=\int_{\Omega} |\omega_1|^2\,\eta(\d \omega)-\int_{\R^d}|x|^2\,\mu(\d x)
\]
for all $\eta\in \mart_D^2(\mu)$. By \cite[Remark 2.2 (ii)]{MR3605716},
we thus find that
\[
\rho(f)=\sup_{\nu\in \cP(\R^d)} \int_{\R^d}f(y)\,\nu(\d y)-\cW_2^\mart(\mu,\nu)^2.
\]
As in Section \ref{sec:mart.constr}, we can therefore apply Theorem \ref{thm.main}, and obtain that
\[
\rho(f)=\int_{\R^d}\sup_{\theta\in \R^d}\sup_{p\in (0,1)} \Big(pf\big(x+\tfrac{\theta}{p}\big)+(1-p)f\big(x-\tfrac{\theta}{1-p}\big)- \tfrac{|\theta|^2}{p(1-p)}\Big)\,\mu(\d x).
\]

\subsection{Matching quoted option prices on a given maturity} \label{sec:model.free.pricing}
Let $X=Y=\R^d$ with $d \in \N$, and consider a two-period market consisting of $d$ assets.\ We assume that the distribution of the assets today, i.e., at time $t=0$, is known and given by a Dirac measure $\mu = \delta_{x_0}$ with $x_0 \in \R^d$.\ We assume that the discounted prices of the assets at time $t=1$ are given by a random vector $x=(x_j)_{j=1,\ldots,d}$, whose distribution is not known precisely.

Let $f_1 \dots, f_n \colon \R \to [0, \infty)$ be the payoffs of $n \in \N$ quoted options, e.g., call options $f_i(x) = (x_i - K_i)^+$ with $K_i \in [0, \infty)$ for $i=1,\ldots,n$.\ The market then gives a set of intervals $I_i=[b_i, a_i]$ with $b_i, a_i \in [0, \infty)$ representing the bid-ask spread for the option $f_i$ for $i=1,\ldots, n$.\ If $\nu$ is the law of $x$, it should satisfy $\int_{\R^d} f_i(y)\,\nu(\d y) \in [b_i,a_i]$, for all $i =1, \dots, n$.\ This conditions can be represented as a set of $2n$ generalized moment constraints. In fact, for each $i$ we consider the constraints $$\int_{\R^d} f_i(y)\,\nu(\d y) \le a_i\quad\text{and}\quad\int_{\R^d} - f_i(y)\,\nu(\d y) \le - b_i.$$
Moreover, we require that the measure $\nu$ does not generate an arbitrage opportunity by imposing the constraint $\int_X y\,\nu(\d y) = x_0$ on the mean of the distribution at time $t=1$. 

We can then express the upper no-arbitrage price bound $\overline V(f)$ for a European option with payoff $f \colon \R^d \to \R$ and maturity $t=1$ using the reference measure $\mu = \delta_{x_0}$ and the cost function $C(x, \vartheta) = \infty \cdot \one_{\cM^c}(\vartheta)$ for $x\in \R^d$ and $\vartheta\in \cP(\R^d)$ with $\cM$ being the set of all probability measures that satisfy the $2n+2$ constraints described above. We may then choose the kernel $\kappa_0$ arbitrarily among all kernels in $\ker(\R^d,\R^d)$ with $\kappa_0(x_0)\in \cM$, and consider
\begin{align*}
\overline V(f) & := \sup_{\nu \in \cM} \int_X f(y)\,\nu(\d y) \\
 & = \sup_{\nu \in \cP(\R^d)} \bigg(\int_X f(y)\,\nu(\d y) - \inf_{\kappa \in \ker(\mu, \nu)} C\big(x_0, \kappa(x_0)\big) \bigg) = \rho(f).
\end{align*}
Here, the second and third equality directly follow from the choice of the reference measure $\mu = \delta_{x_0}$, which implies that $\kappa(x_0) = \nu \in \cM$ for all $\kappa \in \ker(\mu, \nu)$ with $C\big(x_0,\kappa(x_0)\big)<\infty$.

Using the IRP of the set $\cM$ and Theorem \ref{thm.reduction.extr}, together with Theorem \ref{thm:winkler}, we can solve this problem by optimizing over $2n+3$ Dirac measures.\ This leads to
\[
\rho(f) = \sup_{(p,y) \in M} \sum_{i=1}^{2n+3} p_i f(y_i),
\]
where $M = \big\{ (p,y) \in [0,1]^{2n+3} \times (\R^d)^{2n+3} \colon \sum_{i=1}^{2n+3} p_i \delta_{y_i} \in \cM \big\}$.

In a similar way, we can obtain the lower price bound by considering $\underline{V}(f) := -\rho(-f)$, see Section \ref{sec:pricing.after} below.

\section{Numerics} \label{sec.numerics}

In this section, we complement our main result by additional approximation results, which are then applied in a series of numerical examples.

\subsection{Wasserstein uncertainty without constraints}
Let $X=Y$ be a separable Banach space.\ We start by considering the risk measure from Section \ref{sec:wass.unc}, i.e.,
\begin{equation} \label{eq.wass.unconstr}
\rho(f) = \sup_{\nu \in \cP(X)} \bigg( \int_X f(y)\,\nu(\d y) - \cW_p(\mu, \nu)^p \bigg)\quad\text{for }f\in \Bm(X)
\end{equation}
with $p \ge 1$ and $\mu \in \cP_p(X)$. Recall that the risk measure \eqref{eq.wass.unconstr} corresponds to the weak transport cost $C(x, \vartheta):= \int_X |y - x|^p\,\vartheta(\d y)$ for $x\in X$ and $\vartheta\in \cP(X)$ and that, by \eqref{eq.wasserstein}, it can be expressed as
\[
\rho(f) = \sup_{y \in L^p(\mu;X)} \int_X \Big(f\big(x+y(x)\big) - \|y(x)\|^p\Big)\,\mu(\d x) \quad\text{for }f\in \Bm(X)\text{ with \eqref{eq.growthf}}.
\]

The next proposition provides an approximation from below for this risk measure, based on an optimization over a set which is dense in $L^p(\mu;X)$.

\begin{proposition}\label{prop.dense.approx}
    Let $f \colon X\to \R$ be continuous with \eqref{eq.growthf}, and let $(D_n)_{n \in \N}$ be a family of subsets of $L^p(\mu;X)$ with $D^n \subset D^{n+1}$ for all $n \in \N$. If $D:=\bigcup_{n \in \N} D_n$ is dense in $L^p(\mu;X)$, then the risk measure \eqref{eq.wass.unconstr} can be approximated from below in the following sense:
    \[
    \sup_{y \in D^n} \int_X \Big(f\big(x + y(x)\big) - \|y(x)\|^p\Big)\,\mu(\d x) \nearrow \rho(f) \quad \text{as } n \to \infty.
    \]
\end{proposition}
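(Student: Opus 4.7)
The starting point is the variational formula \eqref{eq.wasserstein} established in Section \ref{sec:wass.unc}, which represents
\[
\rho(f) = \sup_{y \in L^p(\mu;X)} \Phi(y), \qquad \Phi(y):=\int_X \Big(f\big(x+y(x)\big) - \|y(x)\|^p\Big)\,\mu(\d x).
\]
So the plan is to show that the supremum over all of $L^p(\mu;X)$ is captured in the limit by suprema over the $D^n$, by exploiting density of $D=\bigcup_n D^n$ together with continuity of $\Phi$ on $L^p(\mu;X)$. Since $D^n\subset D^{n+1}\subset L^p(\mu;X)$, the sequence $a_n:=\sup_{y\in D^n}\Phi(y)$ is already nondecreasing and bounded above by $\rho(f)$, so only the reverse inequality $\lim_n a_n \geq \rho(f)$ requires work.

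For the reverse inequality, I would fix an arbitrary $y\in L^p(\mu;X)$ and, using density of $D$ and the nesting $D^n\subset D^{n+1}$, choose $y_n\in D^n$ with $\|y_n-y\|_{L^p(\mu;X)}\to 0$. It then suffices to prove $\Phi(y_n)\to\Phi(y)$, which I would split into the penalty and loss terms.

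For the penalty term, the mean value theorem applied to $t\mapsto t^p$ together with H\"older's inequality gives
\[
\Big|\int_X \|y_n(x)\|^p-\|y(x)\|^p\,\mu(\d x)\Big|\leq p\big(\|y_n\|_{L^p}+\|y\|_{L^p}\big)^{p-1}\|y_n-y\|_{L^p}\longrightarrow 0.
\]
For the loss term, I would invoke Vitali's convergence theorem. Convergence in $L^p$ implies convergence in $\mu$-measure, so by continuity of $f$, $f(\,\cdot\,+y_n(\cdot))\to f(\,\cdot\,+y(\cdot))$ in $\mu$-measure. Uniform integrability follows from \eqref{eq.growthf}: for $p\geq 1$,
\[
|f(x+y_n(x))|\leq M+c\,\|x+y_n(x)\|^p\leq M+c\,2^{p-1}\big(\|x\|^p+\|y_n(x)\|^p\big),
\]
where $\|x\|^p\in\mathscr L^1(\mu)$ since $\mu\in\cP_p(X)$, and $\{\|y_n\|^p\}_n$ is uniformly integrable because it converges in $\mathscr L^1(\mu)$ by the estimate in the previous step. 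Vitali then yields $\int f(x+y_n(x))\,\mu(\d x)\to\int f(x+y(x))\,\mu(\d x)$, hence $\Phi(y_n)\to\Phi(y)$. Taking the supremum over $y\in L^p(\mu;X)$ in $\lim_n a_n\geq \Phi(y)$ closes the argument via \eqref{eq.wasserstein}.

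The main obstacle is really just the uniform integrability step for the loss term; the penalty term is elementary, and the rest is a bookkeeping exercise. Note that the assumption $c<1$ in \eqref{eq.growthf} is not needed for the continuity argument itself—it only ensures that $\Phi$ is finite on $L^p(\mu;X)$ and hence that the limit equals the (finite) value $\rho(f)$ rather than $+\infty$.
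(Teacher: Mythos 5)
Your proof is correct, and it follows the same overall skeleton as the paper's (the trivial inequality $\rho(f)\geq\sup_{y\in D}$, then density of $D$ plus continuity of the objective functional on $L^p(\mu;X)$, then monotonicity of the nested suprema), but the key analytic ingredient is different. The paper handles the continuity step by pushing $\mu$ forward under $x\mapsto x+y(x)$, bounding $\cW_p(\mu_{y_0},\mu_{y})\leq\|y_0-y\|_{L^p(\mu;X)}$ via the induced coupling, and invoking the fact that $\cW_p$ metrizes the corresponding convergence on $\cP_p(X)$ (so that integrals of continuous functions with $p$-growth converge, citing Villani); you instead prove continuity of $y\mapsto\int_X f\big(x+y(x)\big)-\|y(x)\|^p\,\mu(\d x)$ directly, with a mean-value/H\"older estimate for the penalty term and Vitali's theorem (convergence in $\mu$-measure from $L^p$-convergence and continuity of $f$, plus uniform integrability from \eqref{eq.growthf} and the $L^1$-convergence of $\|y_n\|^p$) for the loss term. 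Your route is more elementary and self-contained, avoids any appeal to Wasserstein metrization, and makes explicit the convergence of the penalty term, which the paper's wording glosses over; the paper's route is shorter given the cited result and stays within the Wasserstein framework of the section. Your closing remark that $c<1$ in \eqref{eq.growthf} is only needed for finiteness of $\rho(f)$ and for the validity of the representation \eqref{eq.wasserstein}, not for the continuity argument, is accurate. The only cosmetic point is the selection $y_n\in D^n$ with $y_n\to y$: for small $n$ the sets $D^n$ may be empty or not yet contain good approximants, so one should either argue along a tail $n\geq n_0$ or, as the paper does, first identify $\sup_{y\in D}$ with $\rho(f)$ and then use $\sup_D=\sup_n\sup_{D^n}$; this is bookkeeping, not a gap.
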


\begin{proof}
    Since $D \subset L_p(\mu;X)$, it follows that
    \begin{align*}
    \rho(f)&\geq \sup_{y \in D} \int_X \Big(f\big(x + y(x)\big) - \|y(x)\|^p\Big)\,\mu(\d x).
    \end{align*}
    To prove the other inequality, let $\eps > 0$ and $y_0 \in L_p(\mu;X)$ with
    \[
    \sup_{y \in L_p(\mu;X)} \int_X \Big(f\big(x + y(x)\big) - \|y(x)\|^p\Big)\,\mu(\d x) \le \int_X \Big(f\big(x + y_0(x)\big) - \|y_0(x)\|^p\Big)\,\mu(\d x) + \frac{\eps}{2}.
    \]
    For $y \in L^p(\mu;X)$, consider the probability measure $\mu_y:=\mu \circ \big(x \mapsto x + y(x)\big)^{-1}\in \cP_p(X)$. Then, using the comonotone coupling,
    $$
    \cW_p(\mu_{y_0}, \mu_{y}) \le \|y_0 - y\|_{L^p(\mu;X)}\quad \text{for all }y\in L^p(\mu;X).
    $$
    Since $D$ is dense in $L^p(\mu;X)$ and the Wasserstein distance metrizes the weak topology on $\cP_p(X)$, cf. \cite[Definition 8.8 and Theorem 6.9]{villani2008optimal}, there exists some $y_D \in L^p(\mu;X)$ with
    \[
    \int_X \Big(f\big(x + y_0(x)\big) - \|y_0(x)\|^p\Big)\,\mu(\d x) \le \int_X \Big(f\big(x + y_D(x)\big) - \|y_D(x)\|^p\Big)\,\mu(\d x) + \frac{\eps}{2}.
    \]
     Altogether, we find that
    \[
    \sup_{y \in L_p(\mu;X)} \int_X \Big(f\big(x + y(x)\big) - \|y(x)\|^p\Big)\,\mu(\d x) \le \int_X \Big(f\big(x + y_D(x)\big) - \|y_D(x)\|^p\Big)\,\mu(\d x) + \eps.
    \]
    Letting $\eps \downarrow 0$ and taking the supremum over all $y\in D$, we obtain that
    \begin{equation} \label{eq:dense.approx}
    \rho(f) = \sup_{y \in D} \int_X \Big(f\big(x + y(x)\big) - \|y(x)\|^p\Big)\,\mu(\d x).
    \end{equation}
    Since $D^n \subseteq D^{n+1}$ for all $n \in \N$,
    \[
        \sup_{y \in D} \int_X \Big(f\big(x + y(x)\big) - \|y(x)\|^p\Big)\,\mu(\d x) = \sup_{n\in \N}\sup_{y \in D_n} \int_X \Big(f\big(x + y(x)\big) - \|y(x)\|^p\Big)\,\mu(\d x),
    \]
    and the claim follows from equation \eqref{eq:dense.approx}.
\end{proof}

The previous proposition provides a constructive way of approximating the risk measure \eqref{eq.wass.unconstr}:\ given an increasing sequence of subsets of $L_p(\mu;X)$, whose union is dense in $L^p(\mu;X)$, we can approximate the risk measure from below by solving a variational problem on the increasing sequence of subsets.\ From a practical point of view, the previous result allows to approximate the $C$-transform of the loss function by looking at an optimizing vector field instead of solving a pointwise optimization.

Following closely the approach in \cite[Section 3]{nendel2022parametric}, we can apply this result in the case $X = \R^d$ with $d\in \N$ using the numerical framework of neural networks. For the sake of a self contained exposition, we provide a full description of the framework and restate the approximation result for this explicit choice of increasing subsets.\ We denote by $\fN_d^{m,n}$ the set of fully connected feed-forward neural networks from $\R^d$ to $\R^d$ with $m$ hidden layers and $n$ neurons per layer, i.e., the set of functions
\[
x \mapsto A_m \circ \psi \circ A_{m - 1} \circ \dots \circ \psi \circ A_0(x),
\]
where $A_i$ are affine transformations of the form $A_i = M_i x + b_i$ with a matrix $M_i$ and a vector $b_i$, $\psi \colon \R \to \R$ is a nonlinear function, called the \textit{activation function}, and the composition with the activation function is to be understood componentwise, i.e., $$\psi((x_1, \dots, x_d)) = (\psi(x_1), \dots, \psi(x_d)).$$ For a neural network in $\fN_d^{m,n}$, $A_0$ is a function from $\R^d$ to $\R^n$, $A_1, \ldots ,A_{m-1}$ are functions from $\R^n$ to $\R^n$, and $A_m$ is a function from $\R_n$ to $\R_d$.\ The fact that $\fN_d^{m,n}$ is dense in $L^p(\mu;\R^d)$ for suitable activation functions allows us to state the following corollary.\ Its proof  relies on classical universal approximation results, cf.\ \cite{hornik1991approximation,hornik1989universal}, and follows closely the one of \cite[Corollary 3.3]{nendel2022parametric}.

\begin{corollary} \label{cor.nn.approx}
    For every $m \in \N$ and every nonconstant Lipschitz continuous activation function $\psi$, 
    \[
    \sup_{y \in \fN_d^{m,n}} \int_{\R^d} \Big(f\big(x + y(x)\big) - |y(x)|^p\Big)\,\mu(\d x) \nearrow \rho(f) \quad \text{as } n \to \infty
    \]
for all continuous functions $f\colon \R^d\to \R$ with \eqref{eq.growthf}.
\end{corollary}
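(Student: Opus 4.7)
The plan is to apply Proposition \ref{prop.dense.approx} with the concrete choice $D_n := \fN_d^{m,n}$ for $n \in \N$. Once the hypotheses of that proposition have been verified, the monotone approximation will follow immediately, since $f$ is continuous and satisfies the growth condition \eqref{eq.growthf} by assumption. Three things then need to be checked: (i) each $D_n$ is contained in $L^p(\mu;\Rd)$; (ii) $D_n \subseteq D_{n+1}$ for every $n \in \N$; and (iii) $D := \bigcup_{n \in \N} D_n$ is dense in $L^p(\mu;\Rd)$.

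For (i), any $y \in \fN_d^{m,n}$ is a finite composition of affine maps and the componentwise Lipschitz activation $\psi$, hence itself Lipschitz on $\Rd$. Therefore $|y(x)| \leq a + b|x|$ for some constants $a, b \geq 0$ depending on $y$, which, combined with $\mu \in \cP_p(\Rd)$, yields $y \in L^p(\mu;\Rd)$. For (ii), given $y \in \fN_d^{m,n}$ with weight matrices $M_0, \dots, M_m$ and biases $b_0, \dots, b_m$, I would embed $y$ into $\fN_d^{m,n+1}$ by padding each hidden weight matrix $M_i$ ($0 \le i \le m-1$) with an extra zero row and the corresponding bias $b_i$ with an extra zero entry, so that every hidden layer contains one additional neuron that outputs the constant $\psi(0)$; I would then pad the subsequent matrix $M_{i+1}$ with an extra zero column so that this additional neuron contributes nothing to the next layer. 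The resulting network produces the same function as $y$, proving $D_n \subseteq D_{n+1}$.

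Step (iii) is the essential ingredient and is the universal approximation property of feed-forward networks: for any nonconstant Lipschitz activation $\psi$, the classical approximation theorems in \cite{hornik1989universal, hornik1991approximation} imply that $\bigcup_n \fN_d^{1,n}$ is dense in $L^p(\mu;\Rd)$; the case of $m \geq 2$ hidden layers follows because any shallow network can be realized as a deeper one by inserting near-identity layers (or directly because the larger family of deeper networks can only be richer in $L^p$-closure). With (i), (ii) and (iii) at hand, the hypotheses of Proposition \ref{prop.dense.approx} are satisfied and the stated monotone convergence $\sup_{y \in \fN_d^{m,n}} \int_{\Rd} \big(f(x + y(x)) - |y(x)|^p\big)\, \mu(\d x) \nearrow \rho(f)$ follows.

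The main obstacle is the density statement in (iii): the universal approximation results must be applied in the right function space and for the class of activations considered here, which includes unbounded Lipschitz functions such as ReLU. This is precisely the generality treated in \cite{hornik1991approximation}, and it is also the step carried out in the analogous argument of \cite[Corollary 3.3]{nendel2022parametric}, whose structure I would follow closely. The remaining items (i) and (ii) are routine once one tracks the Lipschitz bound on neural networks and the elementary zero-padding embedding.
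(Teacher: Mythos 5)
Your proposal is correct and follows essentially the same route as the paper: show that the networks are Lipschitz and hence lie in $L^p(\mu;\R^d)$, invoke Hornik's universal approximation theorem for density, and conclude via Proposition \ref{prop.dense.approx}. The only cosmetic difference is that the paper applies the proposition for $m=1$ and then uses the inclusion $\fN_d^{1,n}\subseteq \fN_d^{m,n}$, whereas you argue density of the depth-$m$ class directly (and spell out the zero-padding nestedness), both resting on the same embedding idea.
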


\begin{proof}
    Since the activation function $\psi$ is Lipschitz continuous, the functions in $\fN_d^{m,n}$ are also Lipschitz continuous as a composition of Lipschitz continuous functions, so that $\fN_d^{m,n} \subset L^p(\mu;X)$.\ For $m = 1$, the convergence follows directly from Proposition \ref{prop.dense.approx} and \cite[Theorem 1]{hornik1991approximation}.  Moreover, $\fN_d^{1,n} \subseteq \fN_d^{m,n}$, which proves the approximation for all $m\in \N$.
	\end{proof}

In this framework, the optimization in the space of neural networks can be performed via gradient descent methods minimizing the loss function
\[
    L(y) = \int_{\R^d} \Big(f\big(x + y(x)\big) - |y(x)|^p\Big) \,\mu(\d x),
\]
where the integral is computed numerically, using either quadrature formulas or a Monte Carlo simulation. Note that, when computing the integral with a Monte Carlo simulation, new samples are drawn at each step of the training, so that gradient descent automatically results in a stochastic gradient descent algorithm.

\subsubsection{A two dimensional example with Wasserstein-2 penalization}
We provide an example for this framework recalling the earthquake example from \cite{nendel2022parametric}.\ We consider a one-period model, where $X = \R^2$ and the reference measure $\mu$ describes an a priori estimate for the location of the epicenter of an earthquake.\ Suppose an insurance company, selling coverage for the damages caused by earthquakes, estimates a loss function that depends on the location of the epicenter of the earthquake and computes a worst case loss penalizing deviations from the reference model with the Wasserstein-2 distance. We thus consider the risk measure \eqref{eq.wass.unconstr} with $p=2$, that is
\[
\rho(f) = \sup_{\nu \in \cP_2(x)} \bigg( \int_{\R^2} f(x)\,\nu(\d x) - \cW_2(\mu, \nu)^2 \bigg).
\]
Using Theorem \ref{thm.main}, one my compute the risk measure $\rho$ by estimating the function $f^C$ and computing its expectation with respect to the reference measure $\mu$, e.g., via a Monte Carlo simulation. We work with the loss function depicted in Figure \ref{fig.earthquake}\subref{subfig:earth.loss}, which models the dependence of the losses on the density of the buildings in a city, or on the distance from two city centers with different exposures.\ As a reference measure $\mu$, we choose the law of a two dimensional Gaussian random variable with mean located at $(0.75, 0.25)$ and identity matrix as covariance matrix. 
We estimate this risk measure by approximating $f^C$ with the neural network approach described above.

\begin{figure}
    \subfloat[Loss function $f$\label{subfig:earth.loss}]{
    \includegraphics[width=0.45\textwidth]{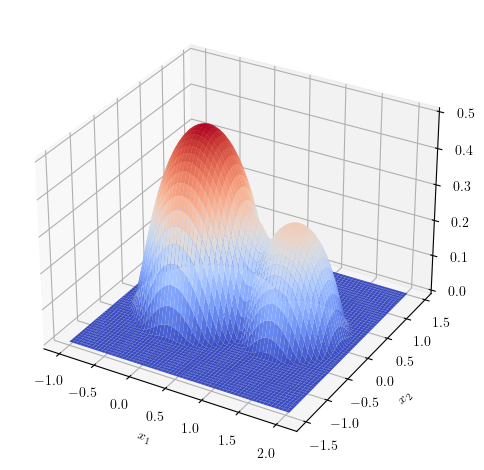}
    }
    \hfill
    \subfloat[$C$-transform of $f$\label{subfig:earth.ctrans}]{
    \includegraphics[width=0.45\textwidth]{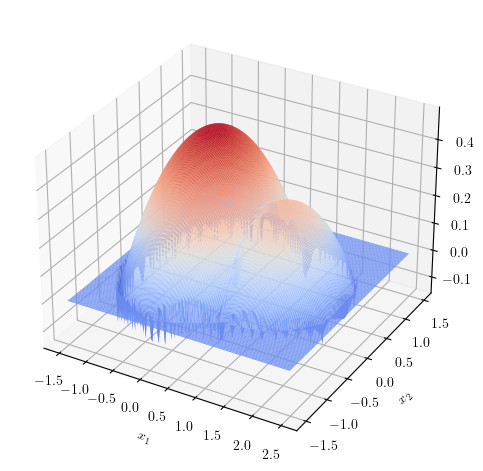}
    }
    \caption{On the left, loss function for the example with Wasserstein uncertainty without constraints. On the right, corresponding $C$-transform.}
    \label{fig.earthquake}
\end{figure}

\begin{figure}
    \centering
    \includegraphics[width=0.5\textwidth]{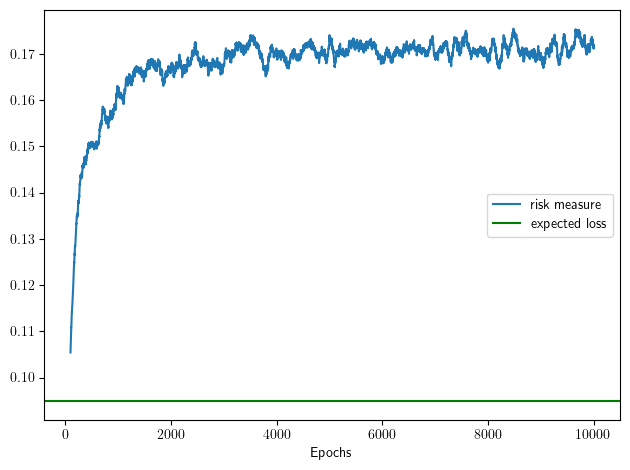}
    \caption{Training of the risk measure for the two dimensional example with Wasserstein penalization.\ The values plotted are moving averages with rolling window of 100 epochs.}
    \label{fig.2dimtrain}
\end{figure}

We use a neural network consisting of $4$ hidden layers and $20$ neurons per layer with ReLU activation function, i.e., $\psi(x) = \max\{x, 0\}$.
The optimization of the neural network is obtained using the Adam optimizer, cf.\ \cite{adam2014}, with a learning rate of $0.001$ and a batch size of $100$ samples, which means that at each iteration only $100$ sample points are used to compute the Monte Carlo.\footnote{All numerical analyses were performed on a laptop equipped with a 4-Core 3.48GHz CPU and 16GB RAM. The examples were implemented in Python 3.9 and the neural network architecture was handled using the package PyTorch (version 1.11.0). Source codes are available at \url{https://github.com/sgarale/risk_meas_wot}.}
Figure \ref{fig.2dimtrain} shows the training phase of the risk measure with the neural network approach and Figure \ref{fig.earthquake}\subref{subfig:earth.ctrans} depicts the resulting $C$-transform after $10000$ training epochs.
In order to obtain a numerical value for $\rho(f)$, one only has to perform a final Monte Carlo simulation to integrate the resulting $C$-transform $f^C$ with respect to the reference measure $\mu$, and this last evaluation can be performed at a low computational cost with a large sample, e.g., one million points.

\subsection{Martingale constraint:\ pricing after quoted maturities} \label{sec:pricing.after}

In this section, we apply the framework from Section \ref{sec:mart.constr} to the pricing of derivatives in a model-free setting. We assume that $X=Y=\Rd$ with $d\in \N$, and consider an arbitrage-free market with $d$ assets and risk-free rate zero.\ Suppose that there are many options quoted on a certain maturity $T>0$, and we therefore know the distribution of the assets at time $T$, and that there are no more quoted options after this maturity.\ If we want to price a European derivative with maturity $T + t$, we then have no information for the calibration of a model at that maturity. Using, however, a model-free approach, we can compute bounds on the option price looking at all probability measures which dominate the distribution of the underlying at time $T$ in convex order. In fact, even without making rigorous assumptions on the market model, we can assume that absence of arbitrage implies that (discounted) asset processes are martingales.

As a best guess for the distribution at time $T+t$, we choose the distribution $\mu$ of the assets at time $T$, and we penalize deviations from $\mu$ by means of a martingale optimal transport cost as discussed in Section \ref{sec:mart.constr}.

In order to take into account the length of the time step $t$, we introduce a parameter that controls the level of uncertainty by a rescaling of the penalty term, considering the cost $c_t(v) := t c\big(\frac{v}{\sqrt{t}}\big)$ for $t>0$ and $v \ge 0$. Roughly speaking, the assumption underlying this rescaling is that the uncertainty in the distribution grows proportionally to the square root of time.

We denote by $\overline{V}_{T+t}(f)$ and $\underline{V}_{T+t}(f)$ the upper and lower bound for the option price at time $T+t$, respectively, i.e.,
\[
\overline{V}_{T+t}(f) = \rho_t^\mart(f) := \sup_{\mu \preceq \nu} \bigg( \int_X f(y)\,\nu(\d y) - \inf_{\pi \in \mart (\mu, \nu)} \int_{X \times X} c_t(|y - x|)\,\pi(\d x, \d y) \bigg).
\]
Note that $\rho_t^\mart(f)$ corresponds to equation \eqref{eq.static.mart} with an additional time scaling parameter.

Using the same arguments, it is natural to define the lower bound for the option price considering the infimum over probability measures that dominate $\mu$ in convex order and penalize them with the same cost. This leads to
\[
\underline{V}_{T+t}(f) := \inf_{\mu \preceq \nu} \bigg( \int_X f(y)\,\nu(\d y) + \inf_{\pi \in \mart (\mu, \nu)} \int_{X \times X} c_t(|y - x|)\,\pi(\d x, \d y) \bigg) = -\rho_t^\mart(-f).
\]

Numerically, we will solve the reduced problem \eqref{eq.mart.reduced} with $c(v) = |v|^p$ and with the additional time parameter in the cost function. As in the previous example, we solve this problem using Proposition \ref{prop:selection}, optimizing outside of the integral and reducing the optimization to the space of neural networks with given width and depth, i.e., we consider the optimization
\[
\begin{split}
    \sup_{(y,v) \in \fN_{d,d+1}^{m,n}} & \int_X \bigg(\phi\big(v(x)\big) \Big(f\big(x + y(x)\big) - c\big(|y(x)|\big)\Big) \\
    & + \Big(1-\phi\big(v(x)\big)\Big) \bigg[f\bigg(x - \frac{\phi\big(v(x)\big)}{1-\phi\big(v(x)\big)} y(x) \bigg) - c_t\bigg(\frac{\phi\big(v(x)\big)}{1-\phi\big(v(x)\big)} |y(x)|\bigg)\bigg]\,\mu(\d x),
\end{split}
\]
where $(y,v) \in \fN_{d,d+1}^{m,n}$, indicates a neural network from $\Rd$ to $\R^{d+1}$, and we denote its components by $y \colon \Rd \to \Rd$ and $v \colon \Rd \to \R$.\ Moreover, we use the transform $\phi(v) = \frac{1}{1 + e^{-v}}$ to map real numbers to probabilities.

\subsubsection{Bull spread option}
We consider a bull call spread option, i.e., a European option with payoff given by the sum of a long call option with strike $K_1 > 0$ and a short call option with strike $K_2 > K_1$.\ This leads to the payoff function
\[
f(x) = (x - K_1)^+ - (x - K_2)^+.
\]
As a reference model, we consider the log-normal distribution corresponding to the marginal at time $T = 0.5$ of a Black-Scholes model with volatility $\sigma = 0.2$.\ We set the initial value of the asset to one and the strikes to $K_1 = 0.9$ and $K_2 = 1.2$.

We work with the cost function $c(v)=|v|^3$, which corresponds to a Wasserstein-3 penalization and compute the upper bound $\overline{V}_{T+t}(f)$ and the lower bound $\underline{V}_{T+t}(f)$ for the arbitrage-free price of the option at different maturities after the quoted maturity $T$. We compare these bounds with the price given by the reference model, i.e., the Black-Scholes price.\ The output is displayed in Figure \ref{fig:bull.spread}\subref{subfig:price.bounds}.\ In Figure \ref{fig:bull.spread}\subref{subfig:bull.spread}, we depict the payoff function and the resulting $C$-transform for a specific maturity.

\begin{figure}
    \subfloat[Payoff and its $C$-transform\label{subfig:bull.spread}]{
    \includegraphics[width=0.45\textwidth]{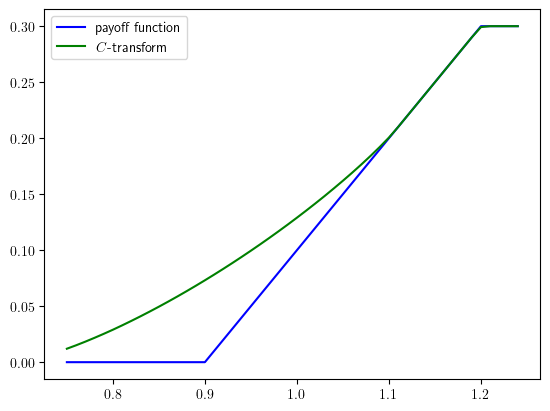}
    }
    \hfill
    \subfloat[Upper and lower price bounds\label{subfig:price.bounds}]{
    \includegraphics[width=0.45\textwidth]{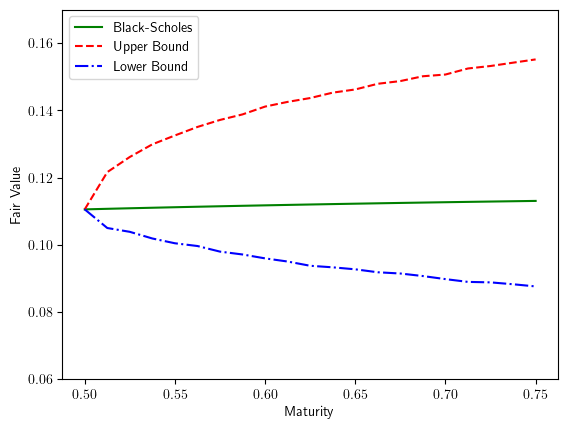}
    }
    \caption{On the left, payoff function and $C$-transform for the upper bound price of the bull spread option $1$ month after the quoted maturity. On the right, upper and lower bounds for the price of the option at different times after the quoted maturity.\ Quoted maturity:\ $T=0.5$.\ Reference model:\ Black-Scholes model with volatility $\sigma=0.2$.}
    \label{fig:bull.spread}
\end{figure}

\subsubsection{Max call option on two assets}
We extend the previous example to a market with two assets.\ We compute the upper bound for the price of a max call option, which is the European option defined by the payoff
\[
f(x_1,x_2) = \big(\max\{x_1,x_2\} - K)^+,
\]
with $K > 0$.\ As reference measure, we consider the marginal at time $T=1$ of a diffusive model composed by two assets whose dynamics under the equivalent martingale measure $\Q$ is given by the SDE
\[
\begin{cases}
    \d X_t = \sigma \d B_t, \\
    X_0 = x_0,
\end{cases}
\text{with } X_t = \begin{pmatrix} X_t^1 \\ X_t^2\end{pmatrix},\,
x_0 = \begin{pmatrix} 1\\1 \end{pmatrix},\,
B_t = \begin{pmatrix} B_t^1 \\ B_t^2\end{pmatrix},\,
\sigma = \begin{pmatrix} 0.30 & 0 \\ 0.05 & 0.20 \end{pmatrix},
\]
where $(B_t)_t$ is a standard two dimensional Brownian motion under $\Q$, see \cite[Chapter 1]{karatzas1998finance}.
We fix $t=1/12$ and penalize deviations from the reference measure with a Wasserstein-3 penalty.
As before, we can employ the previously described neural network approach to compute the upper bound for the price of the max call option with strike at the money $K = 1$. Figure \ref{fig.max.call} displays the payoff function and its $C$-transform.

\begin{figure}
    \subfloat[Neural network optimization]{
    \includegraphics[width=0.45\textwidth]{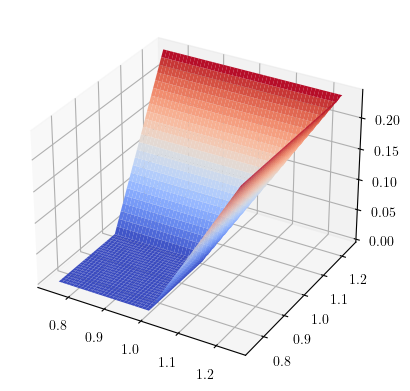}
    }
    \hfill
    \subfloat[Pointwise optimization]{
    \includegraphics[width=0.45\textwidth]{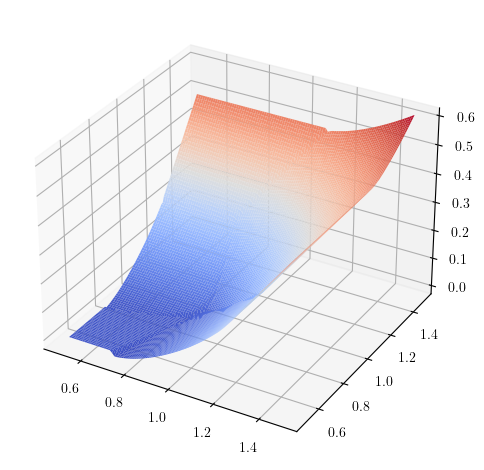}
    }
    \caption{Payoff function and $C$-transform for the upper bound price of the max call option at maturity $1$ year and $1$ month. On the left, payoff function, on the right, $C$-transform. Strike level $K = 1$.}
    \label{fig.max.call}
\end{figure}

\subsubsection{Higher dimensions}\label{sec:higherdim}
We can also work in markets with a higher number of assets. As a matter of fact, the training of the neural network is based on sampling from the reference measure and therefore avoids the \textit{curse of dimensionality}, which arises in methods based on the discretization of the underlying space.\ As the number of assets in the market increases, the network reproducing the measurable selection of the optimizers for the $C$-transform of the payoff function becomes larger, but this does not dramatically affect the performance of the method.\ In Figure \ref{fig:multidim}, we display the result of the evaluation of the upper bound for the price of several options as the number of underlying assets increases, compared with their evaluation time (in seconds).\ For each value, we train the neural network for $10^4$ epochs.\ Again, the reference model is a diffusive model in which each asset has volatility $0.2$ and is driven by a standard Brownian motion.\ Moreover, we assume that the assets start  at time zero from the value $1$.\ Clearly, as in the previous section, it is possible to impose any correlation structure between the assets.\ We assume that the quoted maturity is $T=1$ (one year) and we compute the upper bound for the price at time $T+t$ with $t=1/12$, i.e., one month after the quoted maturity.

\begin{figure}
    \subfloat[Max call option]{
    \includegraphics[width=0.45\textwidth]{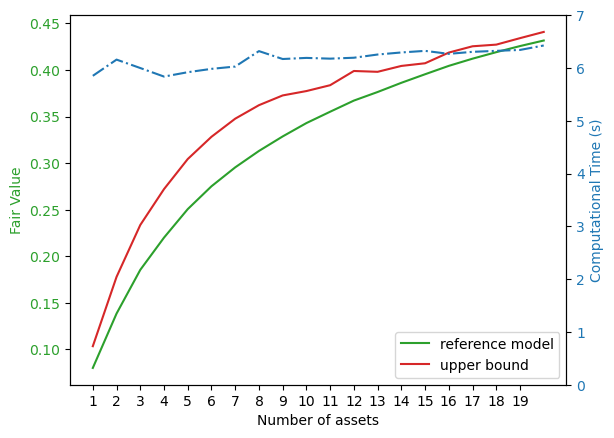}
    }
    \hfill
    \subfloat[Basket call option]{
    \includegraphics[width=0.45\textwidth]{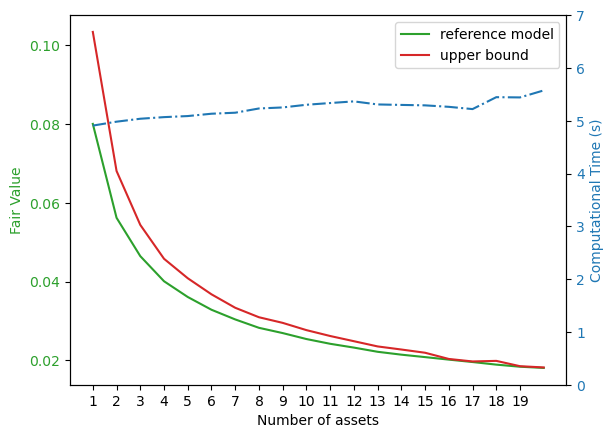}
    } \\
    \subfloat[Min put option]{
    \includegraphics[width=0.45\textwidth]{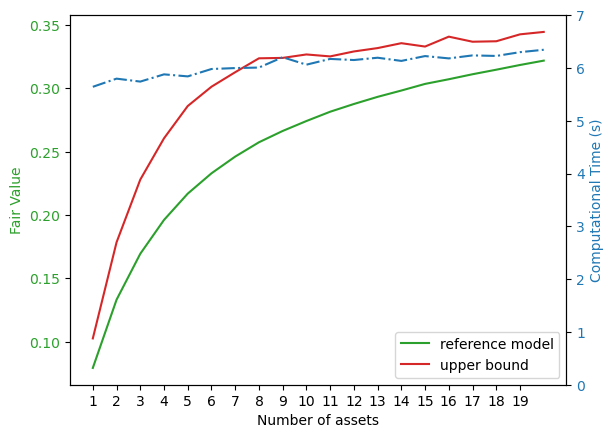}
    }
    \hfill
    \subfloat[Geometric put option]{
    \includegraphics[width=0.45\textwidth]{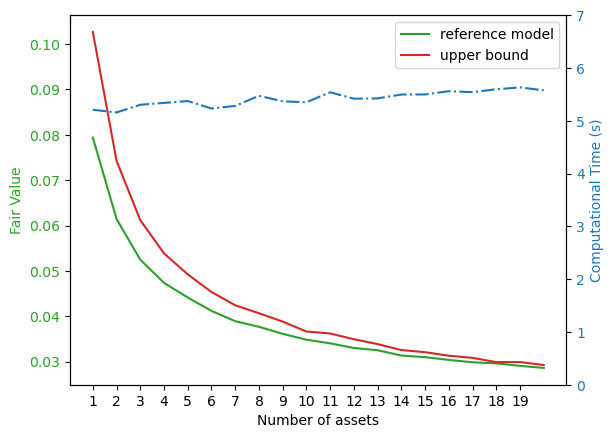}
    }
    \caption{Option prices and evaluation time in seconds as the number of assets in the market increases. Strike of the options $K=1$.}
    \label{fig:multidim}
\end{figure}

We consider four different options, which have the following payoffs:
\begin{align*}
    \text{max call:} \qquad &  f(x_1, \dots, x_d) = \big(\max\{x_1,\dots,x_d\} - K\big)^+ \\
    \text{basket call:} \qquad& f(x_1, \dots, x_d) = \bigg( \frac{1}{d}\sum_{i=1}^d x_i - K\bigg)^+ \\
    \text{min put:}   \qquad &  f(x_1, \dots, x_d) = \big(K - \min\{x_1,\dots,x_d\}\big)^+ \\
    \text{geometric put:} \qquad & f(x_1, \dots, x_d) = \bigg(K - \bigg( \prod_{i=1}^d x_i \bigg)^\frac{1}{d}\bigg)^+
\end{align*}
These options depend on a single strike $K > 0$, which, in the numerical examples, is set to be at the money $K=1$.

%\bibliographystyle{abbrv}
%\bibliography{references}

\end{document}